\documentclass[10pt,a4paper]{amsart}
\usepackage[margin=20mm]{geometry}
\usepackage[numbers]{natbib}
\usepackage[colorlinks]{hyperref}

\usepackage{amssymb,amsmath}
\usepackage{amsthm}
\usepackage{bbm}
\usepackage{stmaryrd}
\usepackage{hyperref}
\usepackage[usenames,dvipsnames]{xcolor}
\usepackage{color}


\input xy
\xyoption{all} 



\newcommand{\cal}{\mathcal}

\numberwithin{equation}{section}

\newtheorem{theorem}{Theorem}[section]
\newtheorem{corollary}[theorem]{Corollary}

\newtheorem{proposition}[theorem]{Proposition}

\theoremstyle{definition}
\newtheorem{definition}[theorem]{Definition}
\newtheorem{remark}[theorem]{Remark}
\newtheorem{example}[theorem]{Example}
\newtheorem{coexample}[theorem]{Counter Example}

\newcommand{\Dleft}{[\hspace{-1.5pt}[}
\newcommand{\Dright}{]\hspace{-1.5pt}]}
\newcommand{\SN}[1]{\Dleft #1 \Dright}

\newcommand{\rmd}{\textnormal{d}}

\newcommand{\rmh}{\textnormal{h}}

\newcommand{\rml}{\textnormal{l}}

\DeclareMathOperator{\w}{w}

\font\black=cmbx10 \font\sblack=cmbx7 \font\ssblack=cmbx5 \font\blackital=cmmib10  \skewchar\blackital='177
\font\sblackital=cmmib7 \skewchar\sblackital='177 \font\ssblackital=cmmib5 \skewchar\ssblackital='177
\font\sanss=cmss10 \font\ssanss=cmss8 
\font\sssanss=cmss8 scaled 600 \font\blackboard=msbm10 \font\sblackboard=msbm7 \font\ssblackboard=msbm5
\font\caligr=eusm10 \font\scaligr=eusm7 \font\sscaligr=eusm5  \font\fraktur=eufm10
\font\sfraktur=eufm7 \font\ssfraktur=eufm5 
\font\bsymb=cmsy10 scaled\magstep2
\def\all#1{\setbox0=\hbox{\lower1.5pt\hbox{\bsymb
       \char"38}}\setbox1=\hbox{$_{#1}$} \box0\lower2pt\box1\;}
\def\exi#1{\setbox0=\hbox{\lower1.5pt\hbox{\bsymb \char"39}}
       \setbox1=\hbox{$_{#1}$} \box0\lower2pt\box1\;}

\def\tx#1{{\fam0\relax#1}}

\newfam\bifam
\textfont\bifam=\blackital \scriptfont\bifam=\sblackital \scriptscriptfont\bifam=\ssblackital

\newfam\blfam
\textfont\blfam=\black \scriptfont\blfam=\sblack \scriptscriptfont\blfam=\ssblack

\newfam\bbfam
\textfont\bbfam=\blackboard \scriptfont\bbfam=\sblackboard \scriptscriptfont\bbfam=\ssblackboard

\newfam\ssfam
\textfont\ssfam=\sanss \scriptfont\ssfam=\ssanss \scriptscriptfont\ssfam=\sssanss
\def\sss#1{{\fam\ssfam\relax#1}}

\newfam\clfam
\textfont\clfam=\caligr \scriptfont\clfam=\scaligr \scriptscriptfont\clfam=\sscaligr

\newfam\frfam
\textfont\frfam=\fraktur \scriptfont\frfam=\sfraktur \scriptscriptfont\frfam=\ssfraktur

\def\hpb#1{\setbox0=\hbox{${#1}$}
    \copy0 \kern-\wd0 \kern.2pt \box0}
\def\vpb#1{\setbox0=\hbox{${#1}$}
    \copy0 \kern-\wd0 \raise.08pt \box0}

\def\pmb#1{\setbox0\hbox{${#1}$} \copy0 \kern-\wd0 \kern.2pt \box0}
\def\pmbb#1{\setbox0\hbox{${#1}$} \copy0 \kern-\wd0
      \kern.2pt \copy0 \kern-\wd0 \kern.2pt \box0}
\def\pmbbb#1{\setbox0\hbox{${#1}$} \copy0 \kern-\wd0
      \kern.2pt \copy0 \kern-\wd0 \kern.2pt
    \copy0 \kern-\wd0 \kern.2pt \box0}
\def\pmxb#1{\setbox0\hbox{${#1}$} \copy0 \kern-\wd0
      \kern.2pt \copy0 \kern-\wd0 \kern.2pt
      \copy0 \kern-\wd0 \kern.2pt \copy0 \kern-\wd0 \kern.2pt \box0}
\def\pmxbb#1{\setbox0\hbox{${#1}$} \copy0 \kern-\wd0 \kern.2pt
      \copy0 \kern-\wd0 \kern.2pt
      \copy0 \kern-\wd0 \kern.2pt \copy0 \kern-\wd0 \kern.2pt
      \copy0 \kern-\wd0 \kern.2pt \box0}


\mathchardef\za="710B  
\mathchardef\zb="710C  
\mathchardef\zg="710D  
\mathchardef\zd="710E  
\mathchardef\zve="710F 
\mathchardef\zz="7110  
\mathchardef\zh="7111  
\mathchardef\zvy="7112 
\mathchardef\zi="7113  
\mathchardef\zk="7114  
\mathchardef\zl="7115  
\mathchardef\zm="7116  
\mathchardef\zn="7117  
\mathchardef\zx="7118  
\mathchardef\zp="7119  
\mathchardef\zr="711A  
\mathchardef\zs="711B  
\mathchardef\zt="711C  
\mathchardef\zu="711D  
\mathchardef\zvf="711E 
\mathchardef\zq="711F  
\mathchardef\zc="7120  
\mathchardef\zw="7121  
\mathchardef\ze="7122  
\mathchardef\zy="7123  
\mathchardef\zf="7124  
\mathchardef\zvr="7125 
\mathchardef\zvs="7126 
\mathchardef\zf="7127  
\mathchardef\zG="7000  
\mathchardef\zD="7001  
\mathchardef\zY="7002  
\mathchardef\zL="7003  
\mathchardef\zX="7004  
\mathchardef\zP="7005  
\mathchardef\zS="7006  
\mathchardef\zU="7007  
\mathchardef\zF="7008  
\mathchardef\zW="700A  
\mathchardef\zC="7009  

\newcommand{\be}{\begin{equation}}
\newcommand{\ee}{\end{equation}}

\newcommand{\bea}{\begin{eqnarray}}
\newcommand{\eea}{\end{eqnarray}}
\def\*{{\textstyle *}}
\newcommand{\R}{{\mathbb R}}

\newcommand{\s}{{\textstyle *}}


\newcommand{\ti}{\times}
\newcommand{\A}{{\cal A}}

\def\Sec{\sss{Sec}}
\def\Vect{\sss{Vect}}



\def\sT{{\sss T}}

\def\xi{\tx{i}}


\def\xd{\operatorname{d}}

\def\s*{{\scriptstyle *}}

\def\cO{\mathcal{O}}

\def\cL{\mathcal{L}}


\newcommand{\beas}{\begin{eqnarray*}}
\newcommand{\eeas}{\end{eqnarray*}}

\def\half{\frac{1}{2}}

\begin{document}
\bibliographystyle{plain}

\author{Andrew James Bruce}
        \address{Mathematics Research Unit, University of Luxembourg, Maison du Nombre 6, avenue de la Fonte,
L-4364 Esch-sur-Alzette, LUXEMBOURG.}\email{andrewjamesbruce@googlemail.com}

\author{Janusz Grabowski}
       \address{Institute of Mathematics, Polish Academy of Sciences, ul. \'{S}niadeckich 8, 00--656 Warsaw, POLAND.}\email{jagrab@impan.pl}

\date{\today}
\title{Pre-Courant algebroids}

 \thanks{The research of JG was funded by the  Polish National Science Centre grant under the contract number DEC-2012/06/A/ST1/00256. }

\begin{abstract}
Pre-Courant algebroids are  `Courant algebroids' without the Jacobi identity for the Courant--Dorfman bracket.  We examine the corresponding supermanifold description of pre-Courant algebroids and some direct consequences thereof. In particular, we define \emph{symplectic almost Lie 2-algebroids} and show how they correspond to pre-Courant algebroids.  We give the definition of (sub-)Dirac structures and study  the na\"{\i}ve quasi-cochain complex within the setting of supergeometry. Moreover, the framework of supermanifolds allows us to economically define and work with pre-Courant algebroids equipped with a compatible non-negative grading. VB-Courant algebroids are  natural examples of  what we  call \emph{weighted pre-Courant algebroids} and our approach drastically simplifies working with them.
\end{abstract}

\maketitle
\vspace{-120pt}
\begin{center}
\emph{Dedicated to the memory of James Alfred Bruce}
\end{center}
\vspace{100pt}

\begin{small}
\noindent \textbf{MSC (2010)}: 17A32;~17B99;~53D17;~58A50.\smallskip

\noindent \textbf{Keywords}: Courant algebroids;~VB-Courant algebroids;~cochain complex;~supermanifolds;~Q-manifolds.
\end{small}

\tableofcontents

\section{Introduction}
There has been a recent drive in the `higher categorification' of mathematics and physics. The categorical approach to `higher structures' is extremely powerful and has given rise to multisymplectic manifolds, Courant algebroids, Dirac submanifolds and $L_{\infty}$-algebras,  to name just  a few mathematical structures. From a physics perspective, the categorical way of thinking has proved useful in axiomatising  the path integral approach to conformal and topological field theories, for example.  However, this categorical approach to `higher structures' can be difficult to work with and this is especially so in the context of finding concrete examples and applications. Allied to this categorical perspective is the study of supermanifolds equipped with some geometrical structure and a compatible additional grading  (see for example \cite{Grabowski:2009, Grabowski:2012, Roytenberg:2002, Severa:2005,Voronov:2001}). In particular, \emph{Q-manifolds} (see \cite{Alexandrov:1997}) with additional gradings on their structure sheaves offers a framework that is much easier to handle than `higher categorification'. For example, Lie algebroids are economically described in these terms following Va\u{\i}ntrob \cite{Vaintrob:1997}.  We remind the reader that a Q-manifold is  a supermanifold $M$, equipped with a Grassmann odd vector field, typically denoted by $Q$, that `squares to zero', i.e., $ Q^2 = \half[Q,Q] =0$.   Note that this condition is non-trivial due to the  $\mathbb{Z}_2$-grading. The vector field $Q \in \Vect(M)$ is typically referred to as a \emph{homological vector field}.   \par
A  \emph{classical gauge system} is a  supermanifold $M$, equipped with an even/odd Poisson bracket and an odd/even \emph{homological potential}  $\theta \in C^\infty(M)$. The further requirement is that the \emph{classical master equation}, $\{\theta,\theta \}=0$, holds.  It is clear that the Hamiltonian vector field associated with the homological potential
$$Q := \{\theta,  \: -  \: \},$$
is, in fact, a homological vector field.  The language here is of course borrowed from the BV-BRST and BFV formalisms of gauge theory.   Many  mathematically interesting and physically relevant   Q-manifolds arise from classical gauge systems, see Lyakhovich \&  Sharapov \cite{Lyakhovich:2004}. For instance,  Courant algebroids can be understood as classical gauge systems \`{a} la Roytenberg \cite{Roytenberg:2002, Roytenberg:2009} and \v{S}evera \cite{Severa:2017}. \par
Loosely, a Courant algebroid (see \cite{Courant:1990,Dorfman:1987,Liu:1997}) is a vector bundle whose space of sections comes equipped with a Loday--Leibniz algebra structure, together with an anchor map and a nondegenerate symmetric bilinear form that satisfy  some compatibility conditions. In particular, the bracket on the space of sections is \emph{not} a Lie bracket, but rather a non-skewsymmetric bracket that satisfies a version of the Jacobi identity.  For a recent review of the history of Courant algebroids, the reader should consult Kosmann-Schwarzbach \cite{Kosmann-Schwarzbach:2013}. \par
 The r\^{o}le of Courant algebroids in physics is two-fold. Firstly,  Courant algebroids were originally developed in order to study the integrability of Dirac structures. Thus,  Courant algebroids are important in geometric mechanics in the presence of constraints. Secondly, Courant algebroids appear to be fundamental in the supergravity limit of string theory, see for example \cite{Asakawa:2015,Bessho:2016}.  The observation that Courant algebroids are related to two-dimensional variational problems is due to \v{S}evera \cite{Severa:2005,Severa:2017}. Courant algebroids also  appear in the context of closed string theory with a  toroidal target space via double field theory,  see for example Deser \& Stasheff \cite{Deser:2015}.  Another connection with string theory is the fact that D-branes  can be identified with certain Dirac structures, see Asakawa, Sasa \& Watamura \cite{Asakawa:2012}. \par
A  \emph{pre-Courant algebroid} is a `Courant algebroid' for which the Jacobi identity for the bracket has been discarded (see Vaisman \cite{Vaisman:2005}). Pre-Courant algebroids naturally appear in the theory of three-dimensional sigma models with Wess-Zumino terms, see Hansen \& Strobl \cite{Hansen:2010}. Furthermore,  pre-Courant algebroids  appear in parabolic geometry via the work of Armstrong \cite{Armstrong:2007} and Armstrong \& Lu \cite{Armstrong:2011}. They are also to be found in the context of Cartan geometry via the work of Xu \cite{Xu:2014}. In short, Courant and  pre-Courant algebroids appear in a range of contexts spanning both pure mathematics and theoretical physics. \par
In this paper, we  reformulate the notion of a pre-Courant algebroid using the language of supermanifolds as suggested in \cite{Liu:2016}. By relaxing the condition that the  function $\theta$ be a homological potential, we arrive at a `weakened' version of a Courant algebroid. That is, we do not assume that the classical master equation is satisfied,  nor do we assume that its violation is controlled in some specific way. However, we note that Liu, Sheng \& Xu \cite{Liu:2016} proved that all \emph{regular pre-Courant algebroids} are in fact  \emph{twisted Courant algebroids}.  We will in due course carefully define \emph{symplectic almost Lie 2-algebroids} and show that under a technical assumption on the grading  they are in one-to-one correspondence with pre-Courant algebroids as defined by Vaisman \cite{Vaisman:2005}.\par
At the risk of getting slightly ahead of ourselves, a symplectic almost Lie 2-algebroid is essentially a `symplectic Lie 2-algebroid' for which $\{\Theta, \Theta \} \neq 0$, but the weaker condition $\{ \{\Theta, \Theta\},f\} =0$ holds for all weight zero functions $f$. Here, $\Theta$ is the degree three Hamiltonian  that,  together with the weight ${-}2$ Poisson bracket, encodes  the pre-Courant algebroid. Although we have lost the Jacobi identity for the Courant--Dorfman bracket we still have compatibility of the anchor map with the bracket. This is the origin of our nomenclature `almost'. \par
 An \emph{almost Lie algebroid} (see for example  \cite{Grabowska:2008, Grabowski:2011, Grabowski:1999}) is a `Lie algebroid' for which the Jacobi identity for the bracket on the space of sections has been discarded, but compatibility of the anchor with the bracket remains intact. Such `weakened' Lie algebroids  naturally appear in geometric mechanics with nonholonomic constraints, and  this fact provides motivation  for the study of almost Lie algebroids and related structures. There is an even looser notion of a \emph{skew algebroid} where the compatibility of the anchor map with the bracket is lost. However,  the `almost' case is much better behaved than the `skew' case for some quite fundamental reasons. First, while for both almost Lie algebroids and skew algebroids one has a clear notion of admissible paths,  in order to develop homotopies of admissible paths compatibility of the anchor with the bracket is needed.  Secondly, while one cannot fully develop the cohomology theory of almost Lie algebroids, their zeroth and first cohomology groups are perfectly well defined. For general  skew algebroids, only the zeroth cohomology group can be defined.  The situation  for pre-Courant algebroids or better put symplectic almost Lie 2-algebroids, is very similar to the case of almost Lie algebroids. Moreover, we will show that the relations between Lie algebroids and Courant algebroids pass over to almost Lie algebroids and pre-Courant algebroids.  In particular,  (sub-)Dirac structures in pre-Courant algebroids give natural examples of almost Lie algebroids.  The  general meta-theorem here is that \emph{the Jacobi identity can be thrown out, but in order to have interesting and useful structures compatibility of the anchor with the bracket must be kept!} \par
Another direction of `categorification' is the study of `double objects' in the sense of Ehresmann. For example, double vector bundles are vector bundles in the category of vector bundles. Other examples can then be  built by adding further structures on one or both of the vector bundle structures. For instance, \emph{VB-algebroids} and \emph{VB-Courant algebroids} are Lie algebroids and Courant algebroids in the category of vector bundles respectively (or via `categorical nonsense', vice versa). The notion of a VB-Courant algebroid is less well known than that of the notion of a VB-algebroid  and was introduced by  Li-Bland  in his Ph.D. thesis \cite{Li-Bland:2012} (also see Jotz Lean \cite{JotzLean:2015} and Lang, Sheng \& Wade \cite{Lang:2018}). We  will define VB-pre-Courant algebroids, as well as their higher graded  versions\footnote{See \cite{Bruce:2015} for the notion of weighted Lie algebroids and weighted Lie groupoids.} in simple terms of pre-Courant algebroids equipped with a compatible homogeneity structure \cite{Grabowski:2009, Grabowski:2012}.  Loosely, a  \emph{weighted pre-Courant algebroid} is a pre-Courant algebroid with a compatible $\mathbb{N}$-grading. In fact, in  \cite{Bruce:2015} we (together with Grabowska) made a very brief incursion into the theory of  weighted Courant algebroids, but here we will explain the relation with VB-Courant algebroids  explicitly.  Using homogeneity structures, i.e., smooth actions of the multiplicative monoid of reals, we simultaneously simplify and generalise  the notion of a VB-Courant algebroid. For example, in this framework  the notion of a weighted, or just a VB,  pre-Dirac structure is conceptually clear. \par
We remark that much of our inspiration for working with double and multiple structures comes from Mackenzie's works on double vector bundles, VB-algebroids etc., in  relation to Poisson geometry, see for example  \cite{Mackenzie:1992,Mackenzie:2000}.

\smallskip

\noindent \textbf{Main Results}: \par
\begin{itemize}
\item We show that there is a one-to-one correspondence between pre-Courant algebroids (for which the Grassmann parity and weight coincide)  and symplectic almost Lie 2-algebroids  (Theorem  \ref{thm:preCourant}).
\item We prove that sub-Dirac structures on pre-Courant algebroids are almost Lie algebroids  (Proposition \ref{prop:Dirac}).
\item We show that there exists  a quasi-cochain map between the na\"{\i}ve quasi-cochain complex and the standard quasi-cochain complex  (Theorem \ref{thm:quasimap}).
\item We prove that the higher order tangent bundles of a pre-Courant algebroid are canonically weighted pre-Courant algebroids  (Theorem \ref{thm:highertangentlift}).
\end{itemize}
We also present several illustrative examples of the various structures encountered in this paper.

\smallskip

\noindent\textbf{Our use of supermanifolds}: We will assume that the reader has some basic working knowledge of the theory of supermanifolds. Formally we understand a supermanifold to be a  locally  ringed space $M: = \big( |M| , \cO_M\big)$ where  the topological space $|M|$ is  Hausdorff  and second countable, and $\cO_M$ is a sheaf of commutative superalgebras  such that  on `small enough' open subsets $|U| \subset |M|$,  is locally isomorphic to $C^\infty(\R^p) \otimes \wedge^\bullet \R^q$.  We will regularly employ local coordinates throughout this paper.  We will denote the Grassmann parity of various objects using `tilde'. The parity reversion functor we will denote as $\Pi$.\par

Although we generally work with supermanifolds, our  intention is not to generalise pre-Courant algebroids to `super pre-Courant algebroids', but rather to show how the framework of symplectic supermanifolds offers new light on the subject of classical pre-Courant algebroids.

\smallskip

\noindent \textbf{Arrangement}: This paper is arranged as follows. In Section \ref{sec:Preliminaries} we recap the basic notions needed in this paper.  In Section \ref{sec:Almost} we present our main definitions of symplectic almost Lie 2-algebroids and how they are related to pre-Courant algebroids. In Section \ref{sec:NGrading} we provide a definition of a pre-Courant algebroid carrying an additional compatible graded structure and show how these generalise VB-Courant algebroids. We finish this paper in Section \ref{sec:Con} with a few concluding remarks.

\section{Preliminaries}\label{sec:Preliminaries}
In this section, we sketch some background material needed throughout the rest of this paper.  The informed reader may safely skip this section.
\subsection{Pre-Courant algebroids}\label{sec:PreCourant}
 Let $(E, \langle \cdot, \cdot \rangle)$ be a pseudo-Euclidean vector bundle over a smooth manifold $M$. The metric induces an isomorphism between $E$ and its dual $E^{\ast}$ in the standard way.
\begin{definition}[Vaisman \cite{Vaisman:2005}]
A \emph{Courant vector bundle} is a pseudo-Euclidean vector bundle equipped with an anchor map
$$\rho: E \rightarrow \sT M,$$
such that $\ker(\rho)$ is a coisotropic distribution in $E$.
\end{definition}
Given any Courant vector bundle $(E, \langle \cdot, \cdot \rangle, \rho)$,  we have a differential operator $\mathcal{D} : C^{\infty}(M) \rightarrow \Sec(E)$ defined  by
\begin{equation}\label{eqn:D}
 \langle \mathcal{D}f, e \rangle:=\rho(e)f.
\end{equation}
\begin{definition}[Vaisman \cite{Vaisman:2005}]\label{def:precal}
A \emph{pre-Courant algebroid} is a Courant vector bundle $(E, \langle \cdot, \cdot \rangle, \rho)$ equipped with a $\R$-bilinear product  $ \circ: \Sec(E) \times \Sec(E)\rightarrow \Sec(E)$, which we will refer to as a  \emph{pre-bracket},  that satisfies:
\begin{enumerate}
\item $\rho(e_{1} \circ e_{2})  = [\rho(e_{1}), \rho(e_{2})]$;\label{def:preC1}
\item $ e_{1} \circ e_{2} + e_{2} \circ e_{1} =  \mathcal{D}\langle e_{1}, e_{2}\rangle$;\label{def:preC2}
\item $\rho(e_{1})\langle e_{2}, e_{3}\rangle = \langle e_{1}\circ e_{2}, e_{3}\rangle + \langle e_{2}, e_{1}\circ e_{3}\rangle $,\label{def:preC3}
\end{enumerate}
for all $e_{1}, e_{2}$ and $e_{3} \in \Sec(E)$.
\end{definition}
 Note that in general, a pre-bracket does not satisfy any form of Jacobi identity.   However, if a pre-bracket satisfies the Jacobi identity in Loday--Leibniz form, then we are dealing with a \emph{Courant algebroid} (see \cite{Courant:1990,Dorfman:1987,Liu:1997}).
\begin{example}
Courant algebroids are examples of pre-Courant algebroids by simply disregarding the Jacobi identity for the Courant--Dorfman bracket.
\end{example}
\begin{example}\label{exp:7crossproduct}
It is well known that  Courant algebroids over a point are precisely quadratic Lie algebras. In this setting, the anchor is trivially zero and the defining axioms of a Courant algebroid reduce to the definition of a Lie algebra  equipped with an  invariant quadratic polynomial. If we replace a quadratic Lie algebra with a quadratic skew algebra (i.e., a vector space satisfying the usual axioms of Lie algebras except for the Jacobi identity), then we obtain an algebraic example of a pre-Courant algebroid. The seven-dimensional cross product
\begin{align*}
& \times: \R^{7}\times \R^{7} \rightarrow \R^{7}\\
& (\textbf{a}, \textbf{b}) \mapsto \textbf{a} \times \textbf{b},
\end{align*}
provides an explicit example of such a structure. Recall that the seven-dimensional  cross product can be defined  using a basis as
$e_{i} \times e_{j} = \sum_{k} \epsilon_{ijk} e_{k},$
where $\epsilon_{ijk}$ is a totally antisymmetric tensor with value $+1$ for $ijk = 123,\; 145, \; 176, \; 246,\: 257, \; 365$. All the remaining components that do not come from a permutation of these values are zero. From this definition (among other identities) we have that:
\begin{eqnarray}
\textbf{a} \times \textbf{b} &=& {-} \textbf{b} \times \textbf{a}; \label{eqn:7Xa}\\
\textbf{a} \cdot (\textbf{b} \times \textbf{c}) &=& \textbf{b} \cdot (\textbf{c} \times \textbf{a}) = \textbf{c} \cdot (\textbf{a} \times \textbf{b}),\label{eqn:7Xb}
\end{eqnarray}
where the `dot' is the  standard Euclidean scalar product. However, in contrast to the three-dimensional cross product, the Jacobi identity is not satisfied. Comparing \eqref{eqn:7Xa} and  \eqref{eqn:7Xb} with the axioms of Definition \ref{def:precal}, we see that we do indeed obtain a pre-Courant algebroid over a point.
\end{example}
 The subspaces  of $\Sec(\wedge^{\bullet} E^{\ast})$ defined as
$$C^{k}(E) = \left\{\Psi \in \Sec(\wedge^{k} E^{\ast})~ | ~~ i_{\mathcal{D}f}\Psi = 0 ~~\textnormal{for all} ~ f \in C^{\infty}(M)  \right\},$$
posses a covariant derivative $\mathcal{D}: C^{k}(E) \rightarrow C^{k+1}(E)$ defined as
$$\mathcal{D}\Psi(e_{1}, \cdots e_{k+1}) = \sum_{i=1}^{k+1} (-1)^{i +1} \rho(e_{i})\Psi(e_{1}, \cdots , \hat{e}_{i} , \cdots e_{k+1}) + \sum_{i<j} (-1)^{i+j}\Psi(e_{i}\circ e_{j} , e_{1}, \cdots, \hat{e}_{i}, \cdots , \hat{e}_{j}, \cdots ,e_{k+1}),$$
which is the natural extension on the operator \eqref{eqn:D}. Following Sti\'{e}non \& Xu \cite{Stienon:2008} and Liu, Sheng \& Xu \cite{Liu:2016} one can construct a quasi-cochain complex associated with a pre-Courant algebroid.
\begin{definition}[Liu, Sheng \& Xu \cite{Liu:2016}]\label{def:naivequasicomplex}
The \emph{na\"{\i}ve quasi-cochain complex } of a pre-Courant algebroid is the quasi-cochain complex
$$(C^{\bullet}(E), ~ \mathcal{D}).$$
\end{definition}
It is important to note that in general  $\mathcal{D}$ is not `homological', i.e., $\mathcal{D}^{2} \neq 0$,  due to the fact that the pre-bracket does not satisfy the Jacobi identity. Thus, we have only a quasi-cochain complex and not a genuine cochain complex.

\subsection{Graded bundles}
Our general understanding of a `graded supermanifold' is in the spirit of Th.~Voronov \cite{Voronov:2001}. \emph{Graded manifolds} are defined as supermanifolds  equipped with a privileged class of atlases in which the  coordinates are assigned weights in $\mathbb{Z}$, which are in general independent of the Grassmann parity. Moreover, the coordinate changes are decreed to be polynomial in non-zero weight coordinates and must respect the weight. An additional condition is that all the non-zero weight coordinates that are Grassmann even must be `cylindrical'.  In precise terms, it means that the associated \emph{weight vector field is complete} \cite[Definition 2.2]{Grabowski:2013}.\par
An important class of graded  manifolds are those that carry non-negative grading.  For the moment  we will only consider  classical smooth manifolds, although the statements here will generalise to the category of supermanifolds (see \cite{Jozwikowski:2016}). We will  require that the grading on the structure sheaf of a (smooth) manifold $F$ is associated with a smooth action $\rmh:\R\ti F\to F$ of the monoid $(\R,\cdot)$ of multiplicative reals. Such an action is referred to as a \emph{homogeneity structure} in the terminology of Grabowski \& Rotkiewicz   \cite{Grabowski:2012}. This action, when reduced to $\R_{>0}$, is the one-parameter group of diffeomorphism integrating the \emph{weight vector field}, thus the weight vector field is  \emph{h-complete}, see Grabowski \cite[Definition 2.2]{Grabowski:2013}. As a  consequence of this, all  homogeneous functions must be of \emph{non-negative integer weight}. Thus, the algebra $\mathcal{A}(F)\subset C^\infty(F)$ spanned by homogeneous functions is $\mathcal{A}(F) =  \bigoplus_{i \in \mathbb{N}}\mathcal{A}^{i}(F)$, where $\mathcal{A}^{i}(F)$ consists of homogeneous functions of degree $i$.\par
Importantly,  for $t \neq 0$ the action $\rmh_{t}$ is a diffeomorphism of $F$, and when $t=0$ it is a smooth surjection $\tau=\rmh_0$ onto $F_{0}=: M$, with the fibres being diffeomorphic to $\mathbb{R}^{N}$ (or in the supercase $\mathbb{R}^{N|P}$).  Thus, the objects obtained are particular kinds of \emph{polynomial bundles} $\tau:F\to M$,  i.e., fibrations which locally look like $U\times\R^N$  ($U \subseteq M$ open) and the changes of coordinates (for a certain choice of an atlas) are polynomials on $\R^N$. For this reason, graded manifolds with non-negative weights \emph{and} h-complete weight vector fields  $\zD \in \Vect(F)$ are also known as \emph{graded bundles} (see \cite{Grabowski:2012}).  \par
On a  graded bundle $F$, one can always choose an  atlas   consisting of charts for which we  have homogeneous local coordinates $\big(x^{a}, y^{I}\big)$, where $\w(x^{a}) =0$. It is convenient to add a redundant index $w$ to indicate the weight of $y^I$. So our coordinates will be $\big(x^{a}, y_w^{I}\big)$, where $\w(y_{w}^{I}) = w$  with $1\leq w\leq k$, for some $k \in \mathbb{N}$ known as the \emph{degree}  of the graded bundle.  Note that, according to this definition, a graded bundle of degree $k$ is automatically a graded bundle of degree $l$ for $l\ge k$. However, there is always a \emph{minimal degree}.  A  little more explicitly, the changes of local coordinates are of the form
\begin{align}\label{eqn:translawsGr}
& x^{a'} = x^{a'}(x), & y^{I'}_{w} = \sum \frac{1}{n!} y^{J_{1}}_{w_{1}} y^{J_{2}}_{w_{2}} \cdots y^{J_{n}}_{w_{n}}T_{J_{n} \cdots J_{2} J_{1}}^{\:\:\:\:\:\:\:\:\:\:\:\: \:\:\:\: I'}(x),
\end{align}
\noindent where $w=w_1+\ldots+w_n$ and we assume the tensors $T_{J_{n} \cdots J_{2} J_{1}}^{\:\:\:\:\:\:\:\:\:\:\:\: \:\:\:\: I'}$ to be graded-symmetric in lower indices.  Naturally, changes of coordinates are invertible and so, automatically, $\left(T_{J}^{\:\: I'}(x)\right)$ is an invertible matrix. A graded bundle  of degree $k$  admits a sequence of  surjections
\begin{equation*} 
F=F_{k} \stackrel{\tau^{k}_{k-1}}{\longrightarrow} F_{k-1} \stackrel{\tau^{k-1}_{k-2}}{\longrightarrow}   \cdots \stackrel{\tau^{3}_2}{\longrightarrow} F_{2} \stackrel{\tau^{2}_1}{\longrightarrow}F_{1} \stackrel{\tau^{1}}{\longrightarrow} F_{0} = M,
\end{equation*}
\noindent where $F_l$ itself is a graded bundle over $M$ of degree $l$ obtained from the atlas of $F_k$ by removing all coordinates of degree greater than $l$. \par
A homogeneity structure is said to be \emph{regular} if and only if
\begin{equation*}
\left.\frac{\rmd }{\rmd t}\right|_{t=0}\rmh_{t}(p) = 0  \hspace{20pt} \Longrightarrow \hspace{20pt} p = \rmh_{0}(p)  \in M,
\end{equation*}
 for all points $p \in F$ (see \cite{Grabowski:2009}). Moreover, if the homogeneity structure is regular, then the graded bundle is, in fact, a vector bundle. The converse is also true, and  so  naturally $\tau^{1}:F_{1} \rightarrow M $ is a vector bundle.\par
\begin{example}\label{exm:highertangent}
The fundamental  example of a graded bundle is the higher tangent bundle $\sT^{k}M$, i.e., the bundle of $k$-th jets (at zero) of curves $\gamma: \mathbb{R} \rightarrow M$.
Given a smooth function $f$ on a manifold $M$, one can construct functions $f^{(\alpha)}$ on $\sT^k M$, where $0\leq \alpha\leq k$, which are referred to as the $(\alpha)$-lifts of $f$ (see \cite{Morimoto:1970}). These functions are  defined by
$$
f^{(\alpha)}([\gamma]_k):= \left.\frac{\rmd^\alpha}{\rmd t^\alpha}\right|_{t=0} f(\gamma(t)),
$$
where $[\gamma]_k\in \sT^k M$ is the class of the curve $\gamma:\R\rightarrow M$.
The smooth  functions $f^{(k)} \in C^\infty\big(\sT^k M \big)$ and $f^{(1)} \in C^\infty\big(\sT M \big)$ are called the $k$-complete lift and the tangent lift of $f$, respectively. A coordinate system $(x^a)$ on $M$ gives rise to so-called \emph{adapted} or \emph{homogeneous} coordinate system  $(x^{a, (\alpha)})_{0\leq \alpha \leq k}$ on $\sT^k M$ in which $x^{a, (\alpha)}$ are of weight $\za$. Fa\`{a} di Bruno's formula, i.e., repeated application of the chain rule, shows that the admissible changes of adapted coordinates are of the form \eqref{eqn:translawsGr}.
\end{example}

Moving on to the supermanifold case, if the Grassmann parity of the homogeneous local coordinates is given by the weight mod 2, then we speak of an \emph{N-manifold} (cf. \cite{Roytenberg:2002, Severa:2005}). In this case $\rmh_{-1}$ acts as the parity operator, i.e., it flips the sign of any Grassmann odd function. \par
 In general, we will not assume that the Grassmann parity has any direct relation to the assignment of weight. We  only assume that the weight(s) and the Grassmann parity are \emph{compatible} in the sense that the action of the homogeneity structure and the parity operator commute, i.e, we can find local coordinates that have both a well-defined weight and Grassmann parity. \par
Morphisms between graded bundles are smooth maps which preserve the weight, so we obtain the category of graded bundles.   In other words, morphisms relate the respective homogeneity structures, or equivalently morphisms relate the respective weight vector fields. \par
Following \cite{Grabowski:2009, Grabowski:2012}, the notion of  \emph{double} and $n$-\emph{fold graded bundles} is clear: we have a (super)manifold equipped with $n\ge 2$ homogeneity structures that commute. For example, if we have a double graded bundle, then we have two homogeneity structures, $\rmh^{1}$ and $\rmh^{2}$ (say), such that
$$\rmh^{1}_{s}\circ \rmh^{2}_{t} =  \rmh^{2}_{t}\circ \rmh^{1}_{s},$$
for all $s$ and $t \in \mathbb{R}$.
If both the homogeneity structures are regular, then we have a double vector bundle. The obvious statements hold for any number of regular homogeneity structures. For further details, the reader is encouraged to consult \cite{Grabowski:2013,Grabowski:2012,Roytenberg:2002, Severa:2005,Voronov:2001}.

\subsection{Pseudo-Euclidean vector bundles} \label{sec:PseudoE}

 It is well known that any pseudo-Euclidean vector bundle $\big(E, \: \langle ,\rangle \big)$ (in the category of smooth manifolds) has a \emph{minimal symplectic realisation} as a symplectic N-manifold of degree 2 (see \cite{Roytenberg:2002}).  The supermanifold $\sT^{*} \Pi E$ -- which is canonically fibred over $\Pi(E\oplus E^{\ast}) $ --  is a bi-graded bundle, where we use  the phase-lift (see \cite{Grabowski:2013}) of the homogeneity structure on $\Pi E$. By passing to the total weight we obtain an N-manifold of degree 2 which carries a canonical symplectic structure. The minimal symplectic realisation of the pseudo-Euclidean vector bundle, which we denote as $F_{2}$, is given by the pullback of $\sT^{*} \Pi E$ with respect to the embedding $E \hookrightarrow E \oplus E^{*}$ given by $X \mapsto \big(X, \langle \half X, \bullet \rangle \big)$. In particular, the symplectic N-manifold completes the following diagram.
\begin{center}
\leavevmode
\begin{xy}
(0,15)*+{F_{2}}="a"; (25,15)*+{\sT^{\ast}\Pi E}="b";%
(0,0)*+{\Pi E}="c"; (25,0)*+{\Pi (E \oplus E^{\ast})}="d";%
{\ar "a";"b"}?*!/^3mm/{};%
{\ar "a";"c"}?*!/^3mm/{};{\ar "b";"d"}?*!/_3mm/{};%
{\ar "c";"d"} ?*!/^3mm/{};%
\end{xy}
\end{center}

\smallskip

 Furthermore, it is well  known that one can always find affine Darboux charts consisting of local coordinates  $(x^{a}, \zx^{i}, p_{b})$  of weight $0, \: 1$ and $2$, respectively, such that the symplectic form is given by
\begin{equation}\label{eqn:SymplecticForm}
\omega =  \rmd p_{a} \rmd x^{a} + \frac{1}{2}\rmd \zx^{i}g_{ij}\rmd\zx^{j},
\end{equation}
 where $g_{ij}$ corresponds to the pseudo-Euclidean structure on $E$. In particular, the local coordinates $\zx^{i}$ correspond to local basis vectors  $e_{i} \in \Sec(E)$, chosen  such that  $\langle e_{i} , e_{j} \rangle = g_{ij} = constant$.  Note that the Grassmann parity of the coordinates correspond to the weight mod $2$. Thus, $\zx^i$ are Grassmann odd, while $x^a$ and $p_b$ are Grassmann even. \par
The changes of affine Darboux coordinates induced by change of local trivialisations of $E$,
\begin{align*}
& x^{a'} = x^{a'}(x), && e_{i'} = T_{i'}^{\:\:j}(x)e_j,
\end{align*}
such that $T_{i'}^{\:\: j}g_{jk}T_{l'}^{\:\:k} = g_{i'l'} = constant$,
are of the form
\begin{align}\label{eqn:TransDarboux}
& x^{a} = x^{a}(x'),\\
& \zx^i = \zx^{j'} T_{j'}^{\:\:i},  \nonumber \\
& p_a = \left(\frac{\partial x^{b'}}{\partial x^a}\right)  p_{b'} + \frac{1}{2!} \zx^{k'}\zx^{l'} \left(\frac{\partial T_{l'}^{\:\: n}}{\partial x^a} \right)g_{nm}T_{k'}^{\:\: m}.\nonumber
\end{align}
Note that, due to the `quadratic' nature of the transformation law for the `momenta' $p_a$, we have a graded super bundle of degree $2$ for which the weight corresponds to the Grassmann parity of the coordinate.\par
In affine Darboux coordinates the associated non-degenerate Poisson bracket is given by
\begin{equation}\label{eqn:PoissonBracket}
\{X, Y \} = \frac{\partial X}{\partial p_{a}} \frac{\partial Y}{\partial x^{a}} - \frac{\partial X}{\partial x^{a}}\frac{\partial Y}{\partial p_{a}} + (-1)^{\widetilde{X}+1}g^{ij}\frac{\partial X}{\partial \zx^{j}}\frac{\partial Y}{\partial \zx^{i}},
\end{equation}
for $X$ and $Y \in \A^{\bullet}(F_{2})$,  where $(g^{ij})=(g_{ij})^{-1}$ is the inverse of the metric. Via  affine Darboux coordinates, it is clear that the Poisson bracket is of weight ${-}2$.\par
Functions of weight one on $F_{2}$ are identified with sections of $E^{\ast}$, noting the  shift in Grassmann parity. As we have a pseudo-Euclidean structure, we have the canonical identifications
$$\A^{1}(F_{2}) \simeq \Sec( E^{\ast}) \simeq \Sec(E),$$
 of locally free $C^\infty(M)$-modules. More specifically, we have the natural identification $\A^{1}(F_{2}) \simeq \Sec( E^{\ast})$ via the one-to-one mapping  $e^i \mapsto \zx^i$, where  $(e^i)$ is a local basis of $\Sec( E^{\ast})$. Note that this mapping is Grassmann odd, but that this does not affect the module structure at all as we are considering the underlying structure of a vector bundle in the category of manifolds. We can then use the metric to raise and lower indices. That is, we have a diffeomorphism
$$\Pi E^\ast \simeq \Pi E,$$
given by
$$\chi_j = \zx^ig_{ij},$$
where $\chi_j$ serve as fibre coordinates on $\Pi E^\ast$. Thus any section of $E^\ast$ can be considered as a section of $E$  by writing
$$\sigma = \zx^i \sigma_i(x) = g^{ji}\sigma_i(x) \chi_j.$$
The converse identification is clear.   Throughout this paper, we will use this identification of sections of $E$ and $E^\ast$ with weight one functions on $F_2$.

\begin{remark}\label{rem:SuperVector}
The preceding constructions generalise to super vector bundles quite directly via minor modifications. Consider a vector bundle $E \rightarrow M$ in the category of (smooth) supermanifolds. As before one can apply the parity reversion functor.  The  vector bundle $\Pi E \simeq F_1$ can be equipped with local coordinates $(x^a, \zx^i)$   as before, but now we allow these coordinates to be even or odd independently of the degree. In particular, $x^a$ serve as coordinates on $M$ which is now a supermanifold itself. The supermanifold $F_2$  is again  assumed to come equipped  with  an even symplectic form  of weight $2$. Due to a version of Darboux theorem, we can  equip $F_2$ with additional coordinates $p_a$ which  have the same parity as $x^a$, and the symplectic structure is locally of the same form as \eqref{eqn:SymplecticForm}. Note that as $g_{ij}$ is constant,  $g_{ij} \neq 0$ only if $\zx^i$ and $\zx^j$ have the same parity. Moreover, $g_{ij} =g_{ji}$ if the parity is odd and $g_{ij} = {-}g_{ji}$ if this parity is even. \par
Such symplectic forms correspond to even non-degenerate bilinear forms on $\Sec(E)$,  or equivalently on $\Sec(E^\ast)$,  which are (locally) represented by $g_{ij}$.  The  inverse structure $g^{ij}$ is again supersymmetric and even.  Note that the `evenness' in this context  implies that the even and odd subspaces of $\Sec(E)$ are orthogonal. In the other direction, given a super vector bundle equipped with an even non-degenerate bi-linear form one can construct the minimal symplectic realisation in the same way as  the classical case. Thus, the local description is more-or-less identical to the case of symplectic N-manifolds of degree $2$, with the exception that the coordinates may now take on various Grassmann parities and $g_{ij}$ splits into  skew-symmetric and symmetric parts. The only real  effect of  passing to the underlying structure of a super vector bundle will be further sign factors in all the local expressions.
\end{remark}

\subsection{Almost Lie algebroids}
As we will encounter skew and almost Lie algebroids in this paper it is worth recalling the definitions (see for example \cite{Grabowski:2011,Grabowski:1999}).
\begin{definition}\label{def:almostLie}
A \emph{skew algebroid}  is a triple $(E, \rho, [-,-])$ where:
\begin{enumerate}
\item $\pi : E \rightarrow M$ is a vector bundle (in the category of supermanifolds);
\item $\rho : E \rightarrow \sT M$  is a vector bundle morphism covering the identity on $M$, known as the anchor;
\item $[-,-] : \Sec(E) \times \Sec(E) \rightarrow \Sec(E)$ is a  (graded) skew-symmetric bilinear map, i.e.,
$$ [u, v] =  {-}(-1)^{\widetilde{u} \widetilde{v}}[v,u],$$
\end{enumerate}
such that the Leibniz rule
$$[u, f\:v] = \rho(u)(f) \: v  + (-1)^{\widetilde{f} \widetilde{u}} f [u,v], $$
holds for all $u$ and $v \in \Sec(E)$ and $f \in C^{\infty}(M)$.  An \emph{almost Lie algebroid} is a skew algebroid such that the further condition:
\begin{enumerate}
\setcounter{enumi}{3}
\item $ \rho([u,v]) = [\rho(u) , \rho(v)]$,
\end{enumerate}
holds for  all $u$ and $v \in \Sec(E)$.  That is, the bracket and anchor are compatible.
\end{definition}
Note that we do \emph{not} insist that the bracket on the space of sections of a skew or almost Lie algebroid is a Lie bracket, i.e.,  in general, the bracket  will not satisfy the Jacobi identity. If the Jacobi identity holds for the bracket on the space of sections of skew algebroid, then we, in fact, have  a \emph{Lie algebroid}. It should be noted that the Jacobi identity for the bracket together with the  Leibniz rule implies that the bracket and anchor are compatible.    \par
Recall that   Lie algebroids can  be described in terms of Q-manifolds following Va\u{\i}ntrob \cite{Vaintrob:1997}. In particular, it is well known that a Lie algebroid structure on a (super) vector bundle $E$ is equivalent to the existence of a weight one homological vector field on the supermanifold $\Pi E$.   Skew algebroids and almost Lie algebroids can also be formulated in similar terms.  Specifically, a skew algebroid structure on a (super) vector bundle $E$ is equivalent to a Grassmann odd vector field of weight one on $\Pi E$. Note that for skew algebroids there is no further condition on the vector field.  For the case of almost Lie algebroids, we need to encode the  compatibility of the bracket and the anchor.  The reader can easily check the following proposition.
\begin{proposition}\label{prop:almostLieQ}
There is a natural one-to-one correspondence between  the two following classes of objects:
 \begin{enumerate}
 \item almost Lie algebroid structures on the (super) vector bundle $\pi: E \rightarrow M$;
 \item  odd vector fields $\rmd_{E}$ of weight one on the supermanifold $\Pi E$ that  satisfy
$$d_{E}^{2}f =0,$$
for all $f \in C^{\infty}(M)$.
 \end{enumerate}
\end{proposition}
\noindent The bracket and anchor can be recovered using the derived bracket formalism of Kosmann-Schwarzbach \cite{Kosmann-Schwarzbach:2003}. We will also encounter almost Lie algebroids that carry a compatible non-negative grading (see \cite{Bruce:2015,Bruce:2016}). For the purposes of this paper it will be convenient to use the following definition.
\begin{definition}\label{def:weightedalmostLie}
A \emph{weighted almost Lie algebroid} of degree $k$ is a double graded superbundle   $E_{(k-1,1)}$ such that the super vector bundle $\pi : E_{(k-1,1)} \rightarrow M_{(k-1,0)} $ is an almost Lie algebroid  with the property that the bracket is of weight $-k$, i.e., for two homogeneous sections of weight $r_{1}$ and $r_{2}$ the resulting bracket is a homogeneous section of weight $r_{1} + r_{2} - k$.
\end{definition}
Let us employ local coordinates $\big(x^{\alpha}_{w}, Y^{\Sigma}_{w} \big)$ on $E_{(k-1,1)}$,  here we have split the coordinates into base and fibre coordinates with respect to the vector bundle structure. Let us denote some chosen basis   of the (local) sections by $(e_{\Sigma}^{w})$. Then,  a homogeneous section of weight $r$ is locally of the form
$$u[r] = u^{\Sigma}[r-1 -w](x,y)e_{\Sigma}^{w},$$
as we consider a basis element to carry weight $(w,1)$.  Here the notation $u^{\Sigma}[r-1 -w](x,y)$ means the homogeneous part of the component of weight $r-1-w$. If this component is seemingly of negative weight we understand it to be zero.

\section{Symplectic almost Lie 2-algebroids and pre-Courant algebroids}\label{sec:Almost}
\subsection{Symplectic nearly and almost Lie 2-algebroids}\label{subsec:SymplecticAlmost}
Taking our cue from Roytenberg \cite{Roytenberg:2002}, we propose the  following (preliminary) definition.
\begin{definition}\label{def:nearlyLie2}
A \emph{symplectic nearly Lie 2-algebroid} consist of the following data:
\begin{enumerate}
\item A degree $2$ graded superbundle $(F_{2},~ \rmh)$;
\item A nondegenerate Poisson bracket $\{ -,-\}$ of weight ${-}2$;
\item An odd function $\Theta \in \A^{3}(F_{2})$.
\end{enumerate}
\end{definition}
Note that we do \emph{not} require any further condition on the function $\Theta$, in particular, we do not insist  that the classical master equation holds, i.e., in general  $\{\Theta, \Theta \} \neq 0$. This means that the Hamiltonian vector field
$$Q_{\Theta} = \{\Theta, \:-\:  \} \in \Vect(F_{2}),$$
\noindent is \emph{not} in general homological. Thus, we will refer to $\Theta$ as a \emph{pre-homological potential}.  Moreover, we do not insist that $F_{2}$ be an N-manifold. Thus, the assignment of Grassmann parity and weight to  geometric objects on $F_2$, e.g., functions and vector fields,  may be quite independent.
\begin{remark}
We understand a \emph{Lie $n$-algebroid} to be a graded superbundle of degree $n$ equipped with a Grassmann odd homological vector field of weight $1$. As we are not restricting ourselves to N-manifolds these supermanifolds also appear under the name \emph{non-linear Lie algebroids} in the works of Voronov \cite{Voronov:2001,Voronov:2012}.  Understanding Lie $n$-algebroids in terms of brackets and anchors is harder work than  the classical case of Lie algebroids, see Bonavolont\`{a} \& Poncin \cite{Bonavolonta:2013} for details. A \emph{symplectic Lie $n$-algebroid} is then a Lie $n$-algebroid equipped with a symplectic structure such that the homological vector field is  a Hamiltonian vector field, see \v{S}evera \cite{Severa:2005,Severa:2017}.
\end{remark}
\begin{remark}
 Recall that multivector fields on a manifold $M$ can be identified with functions on the supermanifold $\Pi \sT^{*}M$, which is known as the \emph{anticotangent bundle}. Note that all functions on the anticotangent bundle of a manifold are automatically polynomial in the fibre coordinates.  In particular, vector fields are identified with linear functions on $\Pi \sT^{*}M$ which are necessarily Grassmann odd. The  odd symplectic bracket on $\Pi \sT^{*}M$ is identified, up to a shift in Grassmann parity, with the Schouten--Nijenhuis bracket  on multivector fields. Restriction of the odd symplectic bracket to linear functions is identified with the standard Lie bracket on vector fields. One should keep the description of the  Lie bracket on the space of vector fields on a smooth manifold in terms of an odd symplectic bracket in mind when reading the rest of this paper, it will help  to explain some of our notation and the appearance of certain minus signs.
\end{remark}
 Sections of $E$ are naturally identified with weight one functions on $F_2$, i.e., $\A^1(F_2) \simeq \Sec(E)$ as locally free $C^\infty(M)$-modules. We will refer to elements of $\A^1(F_2)$ as sections taking into account the already stated identification.  From  the data of a symplectic nearly Lie 2-algebroid we construct  the following structures on the module of sections of $E$,  using the derived bracket formalism (see  Kosmann-Schwarzbach \cite{Kosmann-Schwarzbach:1996, Kosmann-Schwarzbach:2003} and Voronov  \cite{Voronov:2005}). For any $\sigma$ and $\psi \in \A^1(F_2)$, and $f \in C^\infty(M)$ we have:
\begin{enumerate}
\item A \emph{Courant--Dorfman pre-bracket}
$$\SN{\sigma, \psi}_{\Theta} := (-1)^{\widetilde{\sigma}}\{\{\Theta, \sigma \}, \psi \} ;$$
\item An \emph{anchor map} $\rho_{\Theta}(-) :\Sec(E) \rightarrow \Vect(M)$
$$\rho_{\Theta}(\sigma)f := (-1)^{\widetilde{\sigma}} \{ \{\Theta, \sigma \},f \};$$
\item A \emph{nondegenerate pairing} $\langle - , -  \rangle  : \Sec(E) \times \Sec(E) \rightarrow C^\infty(M)$
$$ \langle \sigma, \psi \rangle :=  \{\sigma, \psi \} .$$
\end{enumerate}
Note that the Courant--Dorfman pre-bracket is \emph{not} a  Lie bracket. However, it is a derived bracket and as such  we have the following well-known results.  First, we do not have the  standard symmetry property, but rather
\begin{equation}\label{eqn:symmetry}
\SN{\sigma , \psi}_\Theta = {-} (-1)^{(\widetilde{\sigma} +1)(\widetilde{\psi}+1)} \SN{\psi, \sigma}_\Theta + (-1)^{\widetilde{\sigma}} \{\Theta, \langle \sigma, \psi \rangle \}.
\end{equation}
Note that we  have to take into account the shift in Grassmann parity with the identification of sections of $E$ and weight one functions on $F_2$. The  Courant--Dorfman pre-bracket  is closer to a Loday--Leibniz bracket than a Lie bracket with respect to its symmetry (cf. Loday \cite{Loday:1993}). However, in general, we also lose the Jacobi identity.   The Jacobiator we define as the map
$$ J_{\Theta} : \A^1(F_2) \times \A^1(F_2) \times \A^1(F_2)  \longrightarrow \A^1(F_2) $$
given by
\begin{align}\label{eqn:Jacobi}
\nonumber J_{\Theta}(\sigma, \psi, \lambda) &:= \SN{\sigma, \SN{\psi, \lambda}_{\Theta}}_{\Theta} - \SN{ \SN{\sigma, \psi}_{\Theta}, \lambda}_{\Theta}  {-} (-1)^{(\widetilde{\sigma}+1)(\widetilde{\psi}+1)}\SN{\psi, \SN{\sigma, \lambda}_{\Theta}}_{\Theta}\\
& = (-1)^{\widetilde{\psi}}\frac{1}{2!} \{\{ \{ \{ \Theta, \Theta \}, \sigma \} , \psi\}, \lambda \},
\end{align}
 for all  $\sigma ,\psi$ and $\lambda \in \A^1(F_2)$.  The Jacobiator is in general non-vanishing, as   $\{ \Theta, \Theta\} \neq 0$.
\begin{proposition}\label{prop:Anchor}
For any symplectic nearly Lie $2$-algebroid the following  identities hold:
\begin{enumerate}
\item$ \SN{\sigma, f \:\psi}_{\Theta} = \rho_{\Theta}(\sigma)f \: \psi  + (-1)^{\widetilde{f} (\widetilde{\sigma} + 1)} f \: \SN{\sigma , \psi}_{\Theta}\, ,$
\item $ \rho_{\Theta}(\sigma) \langle \psi, \lambda  \rangle = \langle \SN{\sigma, \psi}_{\Theta}, \lambda \rangle + (-1)^{\widetilde{\psi} (\widetilde{\sigma} +1)} \langle \psi, \SN{\sigma, \lambda}_{\Theta} \rangle\, ,$
\end{enumerate}
for all  $\sigma ,\psi$ and $\lambda \in \A^1(F_2)$ and $f \in C^\infty(M)$.
\end{proposition}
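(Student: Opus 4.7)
The plan is to establish both identities by direct derived-bracket computation, feeding the definitions into the graded Leibniz rule and the graded Jacobi identity for the weight $-2$ Poisson bracket. The key numerics I need at the outset: $\sigma,\psi,\lambda$ have weight $1$, $f$ has weight $0$, $\Theta$ has weight $3$ and is odd, and $\{\cdot,\cdot\}$ has weight $-2$. In particular $\{\Theta,\sigma\}$ has weight $2$ and the parity is $1+\widetilde{\sigma}$; this parity count is what drives the signs below.

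For the first identity I would unfold $\SN{\sigma,f\psi}_{\Theta}=(-1)^{\widetilde{\sigma}}\{\{\Theta,\sigma\},f\psi\}$ and apply the graded Leibniz rule in the second slot:
\[
\{\{\Theta,\sigma\},f\psi\}=\{\{\Theta,\sigma\},f\}\,\psi+(-1)^{(1+\widetilde{\sigma})\widetilde{f}}\,f\,\{\{\Theta,\sigma\},\psi\}.
\]
Now recognise $\{\{\Theta,\sigma\},f\}=(-1)^{\widetilde{\sigma}}\rho_{\Theta}(\sigma)f$ and $\{\{\Theta,\sigma\},\psi\}=(-1)^{\widetilde{\sigma}}\SN{\sigma,\psi}_{\Theta}$ by the definitions of $\rho_{\Theta}$ and the Courant--Dorfman pre-bracket. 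Multiplying back by $(-1)^{\widetilde{\sigma}}$ the two $(-1)^{\widetilde{\sigma}}$ factors annihilate and one lands on $\rho_{\Theta}(\sigma)f\,\psi+(-1)^{\widetilde{f}(\widetilde{\sigma}+1)}f\,\SN{\sigma,\psi}_{\Theta}$, as required.

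For the second identity I would start from $\rho_{\Theta}(\sigma)\langle\psi,\lambda\rangle=(-1)^{\widetilde{\sigma}}\{\{\Theta,\sigma\},\{\psi,\lambda\}\}$ and apply the graded Jacobi identity with $a=\{\Theta,\sigma\}$, $b=\psi$, $c=\lambda$:
\[
\{\{\Theta,\sigma\},\{\psi,\lambda\}\}=\{\{\{\Theta,\sigma\},\psi\},\lambda\}+(-1)^{(1+\widetilde{\sigma})\widetilde{\psi}}\{\psi,\{\{\Theta,\sigma\},\lambda\}\}.
\]
Rewriting the two inner double brackets as $(-1)^{\widetilde{\sigma}}\SN{\sigma,\psi}_{\Theta}$ and $(-1)^{\widetilde{\sigma}}\SN{\sigma,\lambda}_{\Theta}$, and then multiplying through by $(-1)^{\widetilde{\sigma}}$ to cancel the common factor, one recovers precisely $\langle\SN{\sigma,\psi}_{\Theta},\lambda\rangle+(-1)^{\widetilde{\psi}(\widetilde{\sigma}+1)}\langle\psi,\SN{\sigma,\lambda}_{\Theta}\rangle$.

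The only real step to watch is the parity bookkeeping: since Grassmann parity is \emph{not} decreed to be the weight mod $2$ in Definition~\ref{def:nearlyLie2}, one cannot shortcut the signs by the degree. The main obstacle is therefore purely combinatorial — keeping $\widetilde{\{\Theta,\sigma\}}=\widetilde{\sigma}+1$ consistent throughout the Leibniz and Jacobi applications. No use of $\{\Theta,\Theta\}=0$ is made anywhere, which is exactly the point: these two identities survive the loss of the classical master equation, and this is what makes the `almost' setting analogous to almost Lie algebroids, where the anchor still intertwines the bracket with the module structure even though the Jacobiator \eqref{eqn:Jacobi} is non-zero.
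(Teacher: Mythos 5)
Your proof is correct and follows exactly the paper's route: the first identity from the graded Leibniz rule for the weight $-2$ Poisson bracket and the second from its graded Jacobi identity, with the parity of $\{\Theta,\sigma\}$ being $\widetilde{\sigma}+1$ accounting for all the signs. The paper's own proof is just a one-line citation of these two facts, so your write-up simply supplies the bookkeeping it omits.
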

\begin{proof}
The first equation follows directly from the Leibniz rule for the Poisson bracket. The second equation follows from the Jacobi identity for the Poisson bracket.
\end{proof}
In general, we  lose compatibility of the anchor map with the pre-bracket structure. Via direct computation, following  what is known about standard Lie algebroids, we see that for any symplectic nearly Lie $2$-algebroid
\begin{equation}\label{eqn:anchor}
J_{\Theta}(\sigma, \psi, f \: \lambda) - (-1)^{\widetilde{f}(\widetilde{\psi} + \widetilde{\sigma})} f \:  J_{\Theta}(\sigma, \psi , \lambda) = [\rho_{\Theta}(\sigma), \rho_{\Theta}(\sigma)]f \: \lambda ~ {-} ~\rho_{\Theta}\left(\SN{\sigma, \psi} \right) f \:\lambda.
\end{equation}
This implies that we have compatibility of the anchor map and the  Courant--Dorfman pre-bracket if the Jacobi identity is satisfied, i.e., $\{\Theta, \Theta \} =0$. However, we can derive a more general condition on the pre-homological potential that still allows for the compatibility of the anchor and Courant--Dorfman pre-bracket.
\begin{proposition}\label{prop:ViolationAnchorBracket}
For any symplectic nearly Lie $2$-algebroid the following  identity holds:
\begin{equation}\label{eqn:thetasquared}
[\rho_{\Theta}(\sigma), \rho_{\Theta}(\psi)]f ~ {-} ~\rho_{\Theta}\left(\SN{\sigma, \psi}_{\Theta} \right) f \:  =  \frac{1}{2}(-1)^{\widetilde{f}(\widetilde{\sigma} + \widetilde{\psi}) + \widetilde{\psi}} \{\{\{\{\Theta, \Theta \} ,f\}, \sigma\} , \psi\},
\end{equation}
for all $\sigma$ and $\psi \in \A^1(F_2)$, and $f\in C^\infty(M)$.  In particular, the anchor and the  Courant--Dorfman pre-bracket are compatible if and only if $\{\{\Theta, \Theta\}, f \} =0$ for all $f \in C^{\infty}(M)$.
\end{proposition}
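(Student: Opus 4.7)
The plan is to derive equation \eqref{eqn:thetasquared} starting from the identity \eqref{eqn:anchor}, which was obtained above and rewrites the quantity $\bigl([\rho_{\Theta}(\sigma),\rho_{\Theta}(\psi)]f - \rho_{\Theta}(\SN{\sigma,\psi}_{\Theta})f\bigr)\lambda$ in terms of the difference $J_{\Theta}(\sigma,\psi,f\lambda) - (-1)^{\widetilde{f}(\widetilde{\psi}+\widetilde{\sigma})} f\, J_{\Theta}(\sigma,\psi,\lambda)$. I would then substitute the closed formula \eqref{eqn:Jacobi} for the Jacobiator into both terms of this difference.

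Next, I would expand $J_{\Theta}(\sigma,\psi,f\lambda) = \tfrac{1}{2}(-1)^{\widetilde{\psi}}\{\{\{\{\Theta,\Theta\},\sigma\},\psi\},f\lambda\}$ by applying the graded Leibniz rule for the Poisson bracket to the outermost slot. Noting that $\{\{\{\Theta,\Theta\},\sigma\},\psi\}$ has Grassmann parity $\widetilde{\sigma}+\widetilde{\psi}$ (since $\{\Theta,\Theta\}$ is even), this yields two terms; the term proportional to $f$ cancels exactly against the corresponding piece of $(-1)^{\widetilde{f}(\widetilde{\psi}+\widetilde{\sigma})} f\, J_{\Theta}(\sigma,\psi,\lambda)$, leaving a single contribution of the form $\tfrac{1}{2}(-1)^{\widetilde{\psi}}\{\{\{\{\Theta,\Theta\},\sigma\},\psi\},f\}\,\lambda$.

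The key remaining step is to reorder the innermost Poisson brackets so that $f$ is adjacent to $\{\Theta,\Theta\}$, matching the form appearing on the right-hand side of \eqref{eqn:thetasquared}. For this I would use the graded Jacobi identity together with the crucial observation that $\{\sigma,f\}=0$ and $\{\psi,f\}=0$, since the Poisson bracket has weight $-2$ while $\sigma,\psi,f$ have weights $1,1,0$ respectively, so these brackets would be of negative weight and hence vanish on a non-negatively graded bundle. This allows one to commute $f$ past $\sigma$ and $\psi$ at the cost of a sign $(-1)^{\widetilde{f}(\widetilde{\sigma}+\widetilde{\psi})}$, giving precisely the factor appearing in \eqref{eqn:thetasquared}. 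Cancelling the arbitrary weight-one function $\lambda$ then yields the claimed identity.

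Finally, for the ``if and only if'' statement, the $(\Leftarrow)$ direction is immediate from the formula. For the $(\Rightarrow)$ direction, vanishing of the right-hand side for all weight-one $\sigma,\psi$ forces $\{\{\Theta,\Theta\},f\}=0$ by the non-degeneracy of the weight $-2$ Poisson bracket restricted to the appropriate weight components (the pairings weight~$1$ with weight~$1$ and weight~$2$ with weight~$1$ are non-degenerate). The main obstacle is sign bookkeeping in the graded Jacobi manipulations, and in particular ensuring that the super-antisymmetry and Jacobi rules yield exactly the sign $(-1)^{\widetilde{\psi}+\widetilde{f}(\widetilde{\sigma}+\widetilde{\psi})}$ rather than some other combination.
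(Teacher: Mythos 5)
Your proposal is correct and follows essentially the same route as the paper: the identity \eqref{eqn:thetasquared} is obtained by direct evaluation of \eqref{eqn:anchor} using the graded Leibniz and Jacobi identities for the Poisson bracket (the paper omits exactly the details you supply, and your sign bookkeeping --- in particular the use of $\{\sigma,f\}=\{\psi,f\}=0$ on weight grounds to commute $f$ inward --- checks out). For the converse the paper works in affine Darboux coordinates, observing that $\{\{\Theta,\Theta\},f\}$ is a $p$-independent weight-two function and testing against a constant frame $e_{i}$ with $\{e_{i},e_{j}\}=\pm\delta_{ij}$, which is just the coordinate form of the non-degeneracy argument you invoke.
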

\begin{proof}
Equation \eqref{eqn:thetasquared}  follows from direct evaluation of the left-hand side of (\ref{eqn:anchor}) in terms of the Poisson bracket  with the use of the Jacobi identity. The calculation is not illuminating and so we omit details. It is clear that if $\{\{\Theta, \Theta\}, f \} =0$, then the right-hand side of \eqref{eqn:thetasquared}  is zero.\par
In the other direction, we can use affine Darboux coordinates $(x, \zx, p)$ (see Subsection \ref{sec:PseudoE} and Remark \ref{rem:SuperVector}).  In affine Darboux coordinates, the pre-homological potential is given by
$$\Theta =  \zx^{i}Q_{i}^{a}(x)p_{a} + \frac{1}{3!}\zx^{i} \zx^{j} \zx^{k}Q_{kji}(x).$$
Thus, using \eqref{eqn:PoissonBracket}, we observe that  $\{\{\Theta, \Theta\},f\}$ is of the form
$$A:= F^a(x)p_a+\frac{1}{2}\zx^i \zx^j G_{ji}(x),$$
which is of weight $2$. 
Hence, it is sufficient to consider
$\{\{A, \chi_i  \},\chi_j \} = \pm G_{ij} =0$  and $\{ \{A,x^a\zx^k\},\zx^k \}=\frac{1}{2}F^a\{\zx^k,\zx^k\}=0$, where $\chi_i := \zx^jg_{ji}$. Non-degeneracy of the Poisson bracket -- specifically the fact that we can assume that $\{\zx^i,\zx^j\}=\pm\zd^{ij}$ --  implies that $\{\{\Theta, \Theta \}, f\} =0$ as required.
\end{proof}

The considerations above suggest the following refinement of the notion of a symplectic nearly Lie 2-algebroid in order to capture the compatibility of the anchor map with the Courant--Dorfman pre-bracket.
\begin{definition}\label{def:almostLie2}
A \emph{symplectic almost Lie 2-algebroid}  is a symplectic nearly Lie 2-algebroid (Definition \ref{def:nearlyLie2}) with the further condition that
$$\{ \{ \Theta,\Theta\}, f\} =0,$$
for  all $f \in C^{\infty}(M)$.
\end{definition}
The Jacobiator associated with the Courant--Dorfman pre-bracket on a symplectic almost Lie $2$-algebroid has some  interesting properties which we state in the following two propositions.
\begin{proposition}\label{prop:SymmJac}
Let $J_\Theta$ be the Jacobiator associated with a symplectic almost Lie 2-algebroid. Then $J_\Theta$ is  totally (graded) skew-symmetric, i.e.,
$$J_{\Theta}(\sigma, \psi, \lambda) = {-}  (-1)^{(\widetilde{\sigma}+1)(\widetilde{\psi} +1)}J_{\Theta}(\psi, \sigma, \lambda) = {-}(-1)^{(\widetilde{\psi}+1)(\widetilde{\lambda} +1)}J_{\Theta}(\sigma, \lambda, \psi),$$
for all $\sigma, \psi$ and $\lambda \in \A^1(F_2)$.
\end{proposition}
\begin{proof}
Directly from the definition of the Jacobiator  and application of the Jacobi identity for the Poisson bracket, we have
 $$J_{\Theta}(\sigma, \psi, \lambda) + (-1)^{(\widetilde{\sigma}+1)(\widetilde{\psi} +1)}J_{\Theta}(\psi, \sigma, \lambda) = (-1)^{\widetilde{\psi}} \{\{\Sigma, \{ \sigma,\psi \}\}, \lambda \}  ,$$
where we use the shorthand $\Sigma = \half \{\Theta, \Theta \}$.  As $\{\psi, \sigma \} = \langle \psi, \sigma \rangle$ is a function on $M$, the right-hand side of the above vanishes (see Definition \ref{def:almostLie2}). Directly from the Jacobi identity we have
$$J_{\Theta}(\sigma, \psi, \lambda) + (-1)^{(\widetilde{\psi}+1)(\widetilde{\lambda} +1)}J_{\Theta}(\sigma, \lambda, \psi) = (-1)^{\widetilde{\psi}} \{ \{ \Sigma, \sigma \}, \{\psi,\lambda \}  \}.$$
 Using the skew-symmetry of the Poisson bracket and the Jacobi identity again, we arrive at
 $$J_{\Theta}(\sigma, \psi, \lambda) + (-1)^{(\widetilde{\psi}+1)(\widetilde{\lambda} +1)}J_{\Theta}(\sigma, \lambda, \psi) =  (-1)^{\widetilde{\psi}}\left(  \{ \Sigma, \{\sigma, \{\psi, \lambda \}  \} \} - \{ \sigma,\{ \Sigma, \{\psi, \lambda \}\}\}\right)=0,$$
as $\{\sigma, \{\psi, \lambda \}  \} $ vanishes due to the weights of $\sigma, \psi$, and $\lambda$, and the other term vanishes as $\{ \psi, \lambda\}$ is a function on $M$, i.e., from Definition \ref{def:almostLie2}, $\{ \Sigma, \{ \psi, \lambda\}\} =0$.
\end{proof}
A section $\sigma$ of a symplectic almost Lie $2$-algebroid with pre-homological potential $\Theta$ is said to be  \emph{$Q_\Theta$-closed} if $Q_\Theta \sigma = 0$. Similarly, a section $\sigma$ is said to be \emph{$Q_\Theta$-exact} if there exists some $f \in C^\infty(M)$ such that $\sigma = Q_\Theta f$.  From Definition \ref{def:almostLie2} it follows that any $Q_\Theta$-exact section is automatically $Q_\Theta$-closed. Using Proposition \ref{prop:ViolationAnchorBracket} and Proposition \ref{prop:SymmJac} we arrive at the following results.
\begin{proposition}\label{prop:JacLin}
Let $J_\Theta$ be the Jacobiator associated with a symplectic almost Lie 2-algebroid. Then
\begin{enumerate}
\item  $J_\Theta$ is $C^{\infty}(M)$-linear in each argument, and
\item  $J_\Theta$ vanishes if at least one of its arguments is $Q_\Theta$-closed.
\end{enumerate}
 \end{proposition}
 \begin{remark}
 Proposition \ref{prop:SymmJac} and part (1) of Proposition \ref{prop:JacLin} were first proved by  Liu, Sheng \&  Xu \cite{Liu:2016} in a traditional framework. We remark that using supermanifolds and the derived bracket construction simplifies the proofs of these statements significantly.
 \end{remark}

\subsection{The relation with pre-Courant algebroids}
The notion of a symplectic almost Lie 2-algebroid  is closely related to the notion of a pre-Courant algebroid as defined by Vaisman  \cite{Vaisman:2005}. Indeed we have the following theorem, which is a minor modification of the theorem of Roytenberg \cite[Theorem 4.5]{Roytenberg:2002}. In order to keep this paper relatively self-contained, we will reproduce the main elements of Roytenberg's arguments.
\begin{theorem}\label{thm:preCourant}
There is a one-to-one correspondence between symplectic almost Lie 2-algebroids for which $F_{2}$ is an N-manifold and pre-Courant algebroids.
\end{theorem}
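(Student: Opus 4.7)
The plan is to establish the bijection by invoking the minimal symplectic realisation recalled in Subsection \ref{sec:PseudoE}, which already gives a one-to-one correspondence between pseudo-Euclidean vector bundles $(E,\langle\cdot,\cdot\rangle)$ over $M$ and symplectic N-manifolds $F_{2}$ of degree $2$; under this correspondence $\A^{1}(F_{2})\simeq\Sec(E)$ and the restriction of the Poisson bracket to weight-one functions reproduces $\langle\cdot,\cdot\rangle$. The remaining task is to match the cubic Hamiltonian $\Theta$ with the anchor-and-bracket datum.

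Starting from a symplectic almost Lie $2$-algebroid $(F_{2},\{\cdot,\cdot\},\Theta)$ with $F_{2}$ an N-manifold, I would set $\rho := \rho_{\Theta}$ and $e_{1}\circ e_{2} := \SN{e_{1},e_{2}}_{\Theta}$. Axiom (\ref{def:preC1}) of Definition \ref{def:precal} is then the content of Proposition \ref{prop:ViolationAnchorBracket} under the almost condition $\{\{\Theta,\Theta\},f\}=0$; axiom (\ref{def:preC2}) is \eqref{eqn:symmetry} once one identifies $\{\Theta,\cdot\}$ on weight-zero functions with $\mathcal{D}$; and axiom (\ref{def:preC3}) is the second identity of Proposition \ref{prop:Anchor}. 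The coisotropy of $\ker\rho$ amounts to $\rho\circ\mathcal{D}=0$, which follows from
$$\rho_{\Theta}(\mathcal{D}f)\,g \;\propto\; \{\{\Theta,\{\Theta,f\}\},g\} \;=\; \tfrac{1}{2}\{\{\{\Theta,\Theta\},f\},g\}\;=\;0,$$
using the standard identity $\{\Theta,\{\Theta,f\}\} = \frac{1}{2}\{\{\Theta,\Theta\},f\}$ for odd $\Theta$ together with the almost condition.

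In the reverse direction, given a pre-Courant algebroid $(E,\langle\cdot,\cdot\rangle,\rho,\circ)$ I would take $F_{2}$ to be the minimal symplectic realisation, and in affine Darboux coordinates $(x^{a},\xi^{i},p_{a})$ adapted to a local frame $(e_{i})$ with $\langle e_{i},e_{j}\rangle = g_{ij}$ constant, define locally
$$\Theta := \xi^{i}Q_{i}^{a}(x)\,p_{a} + \frac{1}{3!}\xi^{i}\xi^{j}\xi^{k}Q_{kji}(x),$$
where $\rho(e_{i}) = Q_{i}^{a}\partial_{x^{a}}$ and $Q_{kji} := \langle e_{i}\circ e_{j}, e_{k}\rangle$. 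Because $g_{ij}$ is constant in this frame, axioms (\ref{def:preC2}) and (\ref{def:preC3}) force $Q_{kji}$ to be totally antisymmetric in its three indices, so the cubic term is well defined. A direct evaluation with \eqref{eqn:PoissonBracket} then shows that the derived bracket and anchor associated with this $\Theta$ recover $\circ$ and $\rho$, and axiom (\ref{def:preC1}) for the given pre-Courant algebroid forces $\{\{\Theta,\Theta\},f\}=0$ via Proposition \ref{prop:ViolationAnchorBracket}.

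The hardest step is promoting this local $\Theta$ to a globally defined function on $F_{2}$, without a brute-force analysis of the affine transition rules between Darboux charts. I would do this by a non-degeneracy argument: if two locally defined candidates $\Theta_{1}$ and $\Theta_{2}$ produce the same $\rho$ and $\circ$ on an overlap, then their difference $\Delta\in\A^{3}(F_{2})$ satisfies $\{\Delta,f\}=0=\{\Delta,\xi^{\ell}\}$ for all $f\in C^{\infty}(M)$ and all weight-one coordinates $\xi^{\ell}$. Writing $\Delta = \xi^{i}A_{i}^{a}(x)\,p_{a} + \frac{1}{6}\,\xi^{i}\xi^{j}\xi^{k}B_{kji}(x)$ in the same chart and applying \eqref{eqn:PoissonBracket}, the vanishing of $\{\Delta,f\}$ forces $A_{i}^{a}=0$ and the vanishing of $\{\Delta,\xi^{\ell}\}$ then forces $B_{kji}=0$ thanks to non-degeneracy of $g_{ij}$. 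Hence the locally constructed $\Theta$'s glue to a well-defined global $\Theta\in\A^{3}(F_{2})$, and the two constructions are mutually inverse by inspection. Apart from this globality step, the entire argument is algebraic bookkeeping enabled by the derived-bracket results of Propositions \ref{prop:Anchor} and \ref{prop:ViolationAnchorBracket}.
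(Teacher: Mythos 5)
Your proposal is correct and follows essentially the same route as the paper: the forward direction matches axioms (1)--(3) to Proposition \ref{prop:ViolationAnchorBracket}, equation \eqref{eqn:symmetry} and Proposition \ref{prop:Anchor} exactly as we do, and the converse rests on the minimal symplectic realisation together with non-degeneracy of the Poisson bracket forcing uniqueness of $\Theta$ as the generator of the derived data, which is precisely the argument the paper delegates to Roytenberg's Theorem 4.5. Your explicit verification of the coisotropy of $\ker\rho$ via $\{\Theta,\{\Theta,f\}\}=\tfrac{1}{2}\{\{\Theta,\Theta\},f\}=0$, and the gluing of the local cubic Hamiltonians, are welcome details that the paper leaves implicit.
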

\begin{proof}
In one direction the correspondence is clear. From a symplectic almost Lie 2-algebroid we construct a pre-Courant algebroid structure on $E \simeq \Pi F_{1} \simeq E^{*}$ (see Subsection \ref{sec:PreCourant} and  Subsection \ref{sec:PseudoE}). In particular,  Proposition \ref{prop:ViolationAnchorBracket} corresponds to Definition \ref{def:precal}  \eqref{def:preC1};   \eqref{eqn:symmetry} corresponds to Definition \ref{def:precal}  \eqref{def:preC2}; and the second equation of Proposition \ref{prop:Anchor} corresponds to Definition \ref{def:precal}  \eqref{def:preC3}. One has to take a little care with the sign factors remembering that the Grassmann parity and weight coincide (mod $2$).\par
In the other direction, there is a one-to-one correspondence between pseudo-Euclidean vector bundles (in the category of smooth manifolds) and symplectic graded bundles of degree $2$ for that have the underlying structure of an N-manifold (see Subsection \ref{sec:PseudoE} and/or \cite{Roytenberg:2002,Severa:2017}).   The non-degeneracy of the Poisson bracket means that given a Courant--Dorfman pre-bracket, as a derived (pre-)bracket, it uniquely defines the pre-homological potential $\Theta$. More specifically, in affine Darboux coordinates $(x^{a}, p_{b}, \zx^{i})$ any weight $3$ pre-homological potential must be of the form
$$\Theta =  \zx^{i}Q_{i}^{a}(x)p_{a} + \frac{1}{3!}\zx^{i} \zx^{j} \zx^{k}Q_{kji}(x).$$
Then, using the Poisson bracket \eqref{eqn:PoissonBracket}, one can directly deduce the relations between the  pre-bracket and pairing,  and the pre-homological potential:
\begin{align*}
& \langle \SN{\chi_{i}, \chi_{j}}_\Theta, \chi_{k} \rangle = Q_{ijk}, & \rho_{\Theta}(\chi_{i}) =  Q^{a}_{i}\frac{\partial}{\partial x^{a}},
\end{align*}
where we set $\chi_i := \zx^{j}g_{ji}$ and remembering that $g_{ij} = constant$ in this privileged class of coordinates.  These local formulas are of course no different from the standard ones for Courant algebroids following Roytenberg \cite{Roytenberg:2002}. Thus, given a pre-Courant algebroid (in the category of manifolds), we can \emph{locally} construct $\Theta$. More than this, one can directly check that this construction is, in fact, global by examining the permissible coordinate changes (\ref{eqn:TransDarboux}).\par
One can also deduce the \emph{structure equations}  for a symplectic almost Lie 2-algebroid (assuming $F_{2}$ is an N-manifold)
\begin{align*} & Q_{i}^{b}\frac{\partial Q_{j}^{a}}{\partial x^{b}} \: {-} \: Q_{j}^{b}\frac{\partial Q_{i}^{a}}{\partial x^{b}} - g^{lm}Q_{mij}Q_{l}^{a} = 0, & g^{ij}Q_{j}^{a}Q_{i}^{b} =0.
\end{align*}
These equations hold if and only if we have a pre-Courant algebroid structure (see Proposition \ref{prop:ViolationAnchorBracket} and (\ref{eqn:thetasquaredlocal})).
\end{proof}

\noindent \textbf{Nomenclature.} Due to Theorem \ref{thm:preCourant}, we will  use the terms \emph{pre-Courant algebroid} and \emph{symplectic almost Lie 2-algebroid} interchangeably in the rest of this paper. In particular, we will use symplectic almost Lie 2-algebroids to \emph{define} pre-Courant algebroid structures on  super vector bundles (see Subsection \ref{sec:PseudoE} and  in particular Remark \ref{rem:SuperVector}).
\subsection{Examples of pre-Courant algebroids}\label{subsec:Examples}
In this subsection,  we present examples of pre-Courant algebroids generally obtained by modifying some of the well-known examples of symplectic Lie $2$-algebroids.
\begin{example}
Continuing Example \ref{exp:7crossproduct}, consider $\sT^{*}\Pi \mathbb{R}^{7}$ which we equip with coordinates $(\zx^{i}, \pi_{j})$ of bi-weight $(1,0)$ and $(0,1)$ respectively. By passing to total weight, it is clear that the canonical Poisson bracket is of total weight $-2$. Moreover, we have $\{\zx^{i}, \pi_{j} \} =  \delta^{i}_{\:\: j}$ and so we recover the standard Euclidean metric on $\mathbb{R}^{7}$. The pre-homological potential in this example is  given by
$$\Theta =  \frac{1}{2}\zx^{i}\zx^{j}\epsilon_{jik} \delta^{kl}\pi_{l}.$$
A quick calculation shows that
$$\SN{\pi_{i} , \pi_{j}}_{\Theta} =  \sum_{k} \epsilon_{ijk}\pi_{k}.$$
Thus, being careful with the shift in Grassmann parity, we recover the seven-dimensional cross product. As the seven-dimensional cross product does not satisfy the Jacobi identity, $\{\Theta, \Theta \} \neq 0$.  As there are no non-zero functions of weight zero on $\sT^{*}\Pi \mathbb{R}^{7}$, it is clear that we have a `symplectic almost Lie 2-algebra'.
\end{example}
\begin{example}\label{exm:twisted}
Let $E\rightarrow M$ be a vector bundle in the category of smooth manifolds. Consider the supermanifold $\sT^{*}\Pi E$, which we equip with homogeneous local coordinates
$$(\underbrace{x^{a}}_{(0,0)},~ \underbrace{\zx^{i}}_{(1,0)},~\underbrace{p_{b}}_{(1,1)}, ~ \underbrace{\pi_{j}}_{(0,1)}).$$
Here  we have used the phase lift of the natural degree-one homogeneity structure on $\Pi E$ to define the graded structure on the cotangent bundle.  By  passing  to total weight we obtain  a degree $2$ symplectic graded bundle.  Let us further assume  that  $E$ is a Lie algebroid, and so $(\Pi E, \rmd_{E})$ is a Q-manifold. The symbol (or Hamiltonian) of the homological vector field $\rmd_E$ is given by
$$\theta  :=  \zx^{i}Q_{i}^{a}(x)p_{a} + \frac{1}{2!}\zx^{i} \zx^{j}Q_{ji}^{k}(x) \pi_{k},$$
and this provides a homological potential on $\sT^{*}\Pi E$.  Thus, we have a  symplectic Lie 2-algebroid structure canonically associated with the underlying Lie algebroid structure on $E$.  Next we  chose some  Lie algebroid $3$-form
$$\alpha =  \frac{1}{3!} \zx^{i} \zx^{j} \zx^{k} Q_{kji}(x),$$
which we do \emph{not} assume to be $\rmd_E$-closed. We can build a  pencil of pre-homological potentials  by defining
$$\Theta_{s} = \theta + s \: \alpha,$$
with $s\in \mathbb{R}$.  A simple calculation shows that for all $f \in C^\infty(M)$
$$\{\Theta_{s}, \{ \Theta_{s}, f\} \} = s \{ \{\theta, \alpha\},f  \} =0,$$
 as $\{\theta, \alpha \}$ is a Lie algebroid form, i.e. there are no conjugate variables in the final expression. Thus, we obtain a family of symplectic almost Lie $2$-algebroids.  If the $3$-form is closed, then we obviously obtain  a family  of  \emph{twisted Courant algebroids}.
\end{example}
\begin{example}\label{exm:bialgebroid}
Similarly to the previous example, suppose that $E$ comes with the structure of an almost Lie algebroid. Furthermore, let us suppose that $E^{*}$ also comes with the structure of an almost Lie algebroid (see Proposition \ref{prop:almostLieQ}). In terms of Hamiltonian functions on $\sT^{*}\Pi E$ we have
\begin{align*}
& \theta :=  \zx^{i}Q_{i}^{a}(x)p_{a} + \frac{1}{2}\zx^{i} \zx^{j}Q_{ji}^{k}(x)\pi_{k}, && \nu := Q^{ai}(x)\pi_{i}p_{a} + \frac{1}{2}\zx^{k}Q_{k}^{ij}(x)\pi_{j}\pi_{i},
\end{align*}
which are of bi-weights $(2,1)$ and $(1,2)$, respectively,  thus both are of total weight $3$. Then,
$$\Theta_{s} = \theta + s \: \nu,$$
with $s \in \mathbb{R}$ provides   $\sT^{*}\Pi E$ with the structure of a pencil of pre-homological potentials.  Now suppose   $\{ \theta, \nu\} =0$, that is we have an \emph{almost  Lie bi-algebroid} (cf. \cite{Kosmann-Schwarzbach:1995,Roytenberg:2002,Voronov:2001}), then a direct calculation shows that
$$\{ \{ \Theta_{s}, \Theta_{s}\} ,f\} =0,$$
for all $f \in C^{\infty}(M)$. Thus, given an almost Lie bi-algebroid we can construct a family of symplectic almost Lie $2$-algebroids.
\end{example}
\begin{example}
Consider a skew algebra  $(\mathfrak{g}, [\bullet, \bullet])$ (i.e., a vector space  equipped with a bracket satisfying the usual axioms of Lie algebras except for the Jacobi identity) and an action thereof on a manifold $M$. We describe the action in the following way. First, we consider the N-manifold
$$F_{2} :=  \sT^{*}M \times \Pi \mathfrak{g} \times \Pi \mathfrak{g}^{*},$$
which we equip with local coordinates
$$(\underbrace{x^{a}}_{0},~ \underbrace{p_{b}}_{2}, ~\underbrace{\zx^{i}}_{1}, \underbrace{\pi_{j}}_{1}  ).$$
The degree ${-}2$ Poisson bracket is just the canonical Poisson bracket on $F_{2} \simeq \sT^{*}(M \times \Pi \mathfrak{g})$. The pre-homological potential is given by
$$\Theta =  {-} \zx^{i}Q_{i}^{a}(x)p_{a} + \frac{1}{2!}\zx^{i}\zx^{j}Q_{ji}^{k}\pi_{k},$$
where $Q_{i}^{a}$  describe the action and $Q_{ji}^{k}$ are the structure constants of the skew algebra. The minus sign is to ensure we have a right action. We can view $\psi =  \psi^{i}\pi_{i}$ as an element of $\mathfrak{g}$ (once we shift the parity) and write the action as
$$\rho_{\Theta}(\psi) = \psi^{i}Q_{i}^{a}(x)\frac{\partial}{\partial x^{a}}.$$
A simple calculation shows  that $\{\Theta , \{ \Theta, f \} \} =0$ for all $f \in C^{\infty}(M)$ and so we have  a symplectic almost Lie $2$-algebroid. The compatibility of the anchor and the pre-bracket means
$$[\rho_{\Theta}(\psi_{1}), \rho_{\Theta}(\psi_{2})] =  \rho_{\Theta}(\SN{\psi_{1}, \psi_{2}}_{\Theta}).$$
That is, the action still provides a representation although we have lost the Jacobi identity for the `bracket algebra'.
\end{example}
\begin{example}\label{exm:contravariant}
In \cite{Asakawa:2015} it was shown how a  contravariant version of generalised geometry can be constructed starting from a Poisson manifold, their motivation was to construct a `stringy background' with R-flux. We can weaken the constructions slightly and obtain a `contravariant pre-Courant algebroid'. Let $(M, \mathcal{P})$ be a Poisson manifold.  Here we understand the Poisson structure as a function $\mathcal{P} \in \A^{2}(\Pi \sT^{*} M)$ such that $\SN{\mathcal{P}, \mathcal{P}}_{SN}=0$. The bracket here is the canonical Schouten--Nijenhuis bracket. We then pass to the Hamiltonian vector field associated with the Poisson structure and then consider  the symbol (or Hamiltonian) of the resulting vector field. In short,  we have
\begin{align*}
& \mathcal{P} = \frac{1}{2}\Lambda^{ab}(x) x^{*}_{b}x^{*}_{a}  && \mapsto  &  \theta := \Lambda^{ab}(x)x^{*}_{b}p_{a} - \frac{1}{2}\pi^{c}\frac{\partial \Lambda^{ab}}{\partial x^{c}}(x) x^{*}_{b} x^{*}_{a} \in \A^{3}(\sT^{*}\Pi \sT^{*}M),
\end{align*}
where we have employed local coordinates $(x^{a}, x^{*}_{b}, p_{c}, \pi^{d})$ of total weight $0$, $1$, $2$ and $1$ respectively on $\sT^{*} \Pi \sT^{*}M$. Note that $\theta$ is the symbol of the tangent lift $\xd_\sT\mathcal{P}$ \emph{modulo} the Tulczyjew isomorphism $\sT^{*} \Pi \sT M\simeq\sT^{*} \Pi \sT^{*}M$ (see  \cite{Grabowski:1995}), and that $\{\theta, \theta \} =0$ is equivalent to $\mathcal{P}$ being a Poisson structure, i.e., $\SN{\mathcal{P}, \mathcal{P}}_{SN}=0$. Thus, we have a symplectic Lie 2-algebroid and so a genuine Courant algebroid. \par
Now we add  `R-flux' $\mathcal{R} \in \A^{3}(\Pi \sT^{*}M)$ to the mix. We do not assume any compatibility condition between this `R-flux' and the Poisson structure,  usually it is assumed that $\SN{\mathcal{P} , \mathcal{R}}_{SN} =0$. We then define a  pre-Courant algebroid structure via the pre-homological potential
$$\Theta_{\mathcal{R}} := \theta + \mathcal{R} =  \Lambda^{ab}(x)x^{*}_{b}p_{a} - \frac{1}{2} \pi^{c}\frac{\partial \Lambda^{ab}}{\partial x^{c}}(x) x^{*}_{b} x^{*}_{a} + \frac{1}{3!} R^{abc}(x)x^{*}_{c}x^{*}_{b} x^{*}_{a}.$$
Note that
$$\{\{\theta, \mathcal{R}  \},f  \} =0,$$
for all $f \in C^{\infty}(M)$, even if we do not have $\{\theta, \mathcal{R}  \} =0$, as there are no conjugate variables present in the final expression. Similar reasoning shows that $\{ \{\mathcal{R}, \mathcal{R}  \}, f\}=0$. Thus,
$$\{ \{ \Theta_\mathcal{R}, \Theta_\mathcal{R} \}, f \} = \{\{ \theta, \theta\}, f \} + 2 \{\{ \theta, \mathcal{R}\},f \} + \{ \{ \mathcal{R}, \mathcal{R} \}, f \}=0,$$
\noindent for all $f \in C^\infty(M)$. We thus has the structure of a pre-Courant algebroid.
\end{example}
\begin{coexample}\label{exm:quasiP}
Modifying Example \ref{exm:contravariant} slightly, let us start with  a quasi-Poisson structure, that is a  function $\mathcal{P} \in \A^{2}(\Pi \sT^{*} M)$, but now  $\SN{\mathcal{P}, \mathcal{P}}_{SN} \neq 0$. As before we have
$$\Theta := \Lambda^{ab}(x)x^{*}_{b}p_{a} - \frac{1}{2}\pi^{c}\frac{\partial \Lambda^{ab}}{\partial x^{c}}(x) x^{*}_{b} x^{*}_{a} \in \A^{3}(\sT^{*}\Pi \sT^{*}M).$$
Clearly, we do not have a Courant algebroid structure, but do we have a pre-Courant algebroid structure? The answer is no. A direct calculation shows that
$$\{\Theta, \{ \Theta, f\} \} = \frac{1}{2}\left(\Lambda^{bc}\frac{\partial \Lambda^{ad}}{\partial x^{c}} -  \Lambda^{dc}\frac{\partial \Lambda^{ab}}{\partial x^{c}} - \Lambda^{ac}\frac{\partial \Lambda^{bd}}{\partial x^{c}} \right)x^{*}_{d}x^{*}_{b}\frac{\partial f}{\partial x^{a}},$$
for any $f \in C^{\infty}(M)$. Thus, we  must require $\mathcal{P} \in \A^{2}(\Pi \sT^{*} M)$ to be Poisson if the above is to vanish, and so $\{ \Theta, \Theta\} =0$. That is, given a quasi-Poisson structure we can build a symplectic nearly Lie 2-algebroid, but \emph{not} a pre-Courant algebroid, i.e., a symplectic almost Lie 2-algebroid.
\end{coexample}

\subsection{Sub-Dirac structures and almost Lie algebroids}
Example \ref{exm:bialgebroid} leads us to the following observation.
\begin{proposition}
The cotangent bundle of any almost Lie algebroid $E$ is canonically a pre-Courant algebroid.
\end{proposition}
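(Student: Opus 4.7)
The strategy is to promote the Q-manifold description of the almost Lie algebroid (Proposition \ref{prop:almostLieQ}) to a Hamiltonian on $\sT^{*}\Pi E$ and then invoke Theorem \ref{thm:preCourant}. Accordingly, I would first equip $\sT^{*}\Pi E$ with the phase lift of the canonical homogeneity structure on $\Pi E$, combined with the cotangent weight; passing to the total weight yields a graded bundle of degree $2$ over $M$ on which the canonical symplectic structure has weight $2$, so the canonical Poisson bracket has weight $-2$. In affine Darboux coordinates $(x^{a},\zx^{i},p_{b},\pi_{j})$ of total weights $0,1,2,1$ this is just the standard canonical Poisson bracket, and the underlying pseudo-Euclidean vector bundle of the would-be pre-Courant algebroid is $E\oplus E^{*}\to M$ equipped with its hyperbolic pairing (cf.\ Section \ref{sec:PseudoE}).

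For the pre-homological potential $\Theta\in\A^{3}(\sT^{*}\Pi E)$ I would take the principal symbol, i.e.\ the tautological Hamiltonian lift, of the weight-one odd vector field $\rmd_{E}$. If locally
\[
\rmd_{E} = \zx^{i}Q_{i}^{a}(x)\frac{\partial}{\partial x^{a}} + \half\,\zx^{i}\zx^{j}Q_{ji}^{k}(x)\frac{\partial}{\partial \zx^{k}},
\]
then
\[
\Theta = \zx^{i}Q_{i}^{a}(x)\,p_{a} + \half\,\zx^{i}\zx^{j}Q_{ji}^{k}(x)\,\pi_{k},
\]
which is odd and of total weight $3$; its definition is manifestly coordinate-independent and hence canonical. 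By construction one has $\{\Theta,g\}=\rmd_{E}g$ for every $g\in C^{\infty}(\Pi E)$, i.e.\ for every function on $\sT^{*}\Pi E$ independent of the momenta.

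The only substantive step is then to verify the symplectic almost Lie $2$-algebroid condition
\[
\{\{\Theta,\Theta\},f\}=0 \qquad \text{for all } f\in C^{\infty}(M).
\]
Since such an $f$ lies in $C^{\infty}(\Pi E)$, the Jacobi identity for the Poisson bracket gives
\[
\half\,\{\{\Theta,\Theta\},f\} \;=\; \{\Theta,\{\Theta,f\}\} \;=\; \rmd_{E}^{2}f,
\]
which vanishes precisely by the defining condition of an almost Lie algebroid. Applying Theorem \ref{thm:preCourant} then converts $(\sT^{*}\Pi E,\{\cdot,\cdot\},\Theta)$ into the claimed canonical pre-Courant algebroid.

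I expect no serious obstacle here: the entire argument reduces to the tautological identity $\{\Theta,\bullet\}|_{C^{\infty}(\Pi E)}=\rmd_{E}$ and a single use of the Jacobi identity for the Poisson bracket. The mildest bookkeeping point is that the weight-zero subalgebra of $C^{\infty}(\sT^{*}\Pi E)$ coincides with $C^{\infty}(M)$, which is automatic from the phase-lift convention and is exactly what is needed so that the verified identity matches the hypothesis available from Proposition \ref{prop:almostLieQ}.
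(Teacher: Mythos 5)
Your proposal is correct and takes essentially the same route as the paper: the paper's one-line proof specialises Example \ref{exm:bialgebroid} by setting the dual structure $\nu=0$, which leaves exactly your $\Theta$ (the Hamiltonian symbol of $\rmd_{E}$), and the verification $\{\{\Theta,\Theta\},f\}=2\,\rmd_{E}^{2}f=0$ for $f\in C^{\infty}(M)$ is precisely the content implicitly invoked there via Proposition \ref{prop:almostLieQ}. No gaps.
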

\begin{proof}
This follows from Example \ref{exm:bialgebroid}  by equipping the dual  vector bundle $ E^{*}$ with the trivial Lie algebroid structure. That is, we consider  the Hamiltonian on $\sT^* \Pi E$ with $\theta \neq 0$ and  $\nu =0$.
\end{proof}
The  above proposition and the previous examples show that pre-Courant algebroids are to almost Lie algebroids what Courant algebroids are to Lie algebroids. The converse,  via Dirac structures, is also true. Our  general understanding of Dirac structures comes from  \v{S}evera \cite{Severa:2005,Severa:2017}.
\begin{definition}\label{def:Dirac}
Let $(F_{2}, \rmh, \{, \}, \Theta)$ be a symplectic almost Lie $2$-algebroid. A \emph{sub-Dirac structure} is a graded subbundle $\iota : \mathcal{D} \hookrightarrow F_{2}$ over $M_{0} := \rmh_{0}(\mathcal{D})$ such that:
\begin{enumerate}
\item $\mathcal{D}$ is isotropic;
\item $Q_{\Theta}$ is tangent to $\mathcal{D}$.
\end{enumerate}
Moreover, if the isotropic subbundle is maximal, then we speak of a \emph{Dirac structure}.
\end{definition}
\begin{proposition}\label{prop:Dirac}
If $\mathcal{D}$ is a sub-Dirac structure, then the vector bundle $L$ defined by $\Sec(L) \simeq  \mathcal{A}^{1}(\mathcal{D})$ is an almost Lie algebroid.
\end{proposition}
\begin{proof}
 The condition that $Q_{\Theta}$ is tangent to $\mathcal{D}$ ensures that $\mathcal{A}^{1}(\mathcal{D})$ is closed under the Courant--Dorfman pre-bracket. Because $\mathcal{D}$ is an isotropic subbundle, restriction of the pre-bracket to sections of $\mathcal{D}$ is skew-symmetric, see \eqref{eqn:symmetry}. Taking into account the  shift in Grassmann parity of elements of $\mathcal{A}^{1}(\mathcal{D})$, we see that the axioms of Definition \ref{def:almostLie} are satisfied.
\end{proof}
\begin{remark}
Proposition \ref{prop:Dirac} was first discovered by Gr\"{u}tzmann \&  Xu \cite[page 4]{Grutzmann:2012} in the context of sub-Dirac structure supported over \emph{all} of $M$. Sheng \& Liu  \cite{Sheng:2013} showed that for H-twisted Courant algebroids,  Dirac structures are  H-twisted Lie algebroids (also see \cite{Grutzmann:2011,Hansen:2010}).
\end{remark}
 \begin{example}[Adapted from \cite{Grabowski:2014}]
 Consider the supermanifold $\sT^{*}\Pi E$ which we equip  with local coordinates
$$(\underbrace{x^{a}}_{(0,0)},~ \underbrace{\zx^{i}}_{(1,0)},~\underbrace{p_{b}}_{(1,1)}, ~ \underbrace{\pi_{j}}_{(0,1)}),$$
where we have used the phase lift of the natural degree one homogeneity structure on $\Pi E$ to define the graded structure on the cotangent bundle.  We can then pass to total weight and clearly, we have a degree $2$ symplectic graded bundle, we will assume that we have an N-manifold and so $E$ is a vector bundle in the category of smooth manifolds.  Furthermore, let us assume that we have a \emph{projectable} pre-homological potential that defines a pre-Courant algebroid structure. By projectable we mean that the odd vector field $Q_{\Theta} = \{\Theta, - \}$ is projectable to $\Pi E$. The projectability directly implies that the pre-homological potential is of the local form
$$\Theta =  \zx^{i}Q_{i}^{a}(x)p_{a} + \frac{1}{2!}\zx^{i}\zx^{j}Q_{ji}^{k}(x)\pi_{k} + \frac{1}{3!} \zx^{i}\zx^{j} \zx^{k}Q_{kji}(x) := \Theta_{(2,1)} + \Theta_{(3,0)}.$$
Clearly,  $\Pi E\subset\sT^*\Pi E$ is a Dirac structure  and $E$ is an almost Lie algebroid via Proposition \ref{prop:almostLieQ}.   It is obvious that $\Theta = 0$ on $\Pi E^*$. \par
Now let us look for another Dirac structure in $\sT^{*}\Pi E$. Consider a bi-vector field $\Lambda = \frac{1}{2}\Lambda^{ij}(x)\pi_{j}\pi_{i}$ which we view as a function on the supermanifold $\Pi E^{*}$. Associated with any bi-vector is the Lagrangian submanifold $\iota_\Lambda : \mathcal{D}_{\Lambda} \hookrightarrow  \sT^{*}\Pi E$ given in local coordinates as
\begin{align*}
& \zx^{i} \circ \iota_\Lambda = \{ \Lambda, \zx^{i} \}, && p_{a} \circ \iota_\Lambda = \{\Lambda , p_{a}  \}.
\end{align*}
Direct calculation (which is not illuminating) gives
$$\Theta \circ \iota_\Lambda =  \frac{1}{2} \{ \Lambda, \{ \Lambda , \Theta_{(2,1)} \} \} + \frac{1}{3!} \{ \Lambda, \{ \Lambda, \{ \Lambda, \Theta_{(3,0)}\} \} \}.$$
As $\mathcal{D}_{\Lambda}$ is a Dirac structure if and only if $\Theta \circ \iota_\Lambda =0$, we see that  $\mathcal{D}_{\Lambda}$ is a Dirac structure if and only if $\Lambda$ is a  \emph{twisted Poisson structure}  on the almost Lie algebroid $E$ (see \cite{Kosmann-Schwarzbach:2005} and references therein).
 \end{example}
 \begin{example} In relation to the previous example, let us now consider the graph of a `two from' $\alpha = \frac{1}{2!} \zx^{i}\zx^{j}\alpha_{ji}(x)$, which we consider as a function on $\Pi E$. Associated with any two form is the Lagrangian submanifold $\iota_\alpha : \mathcal{D}_{\alpha} \hookrightarrow \sT^{*} \Pi E$ given in local coordinates as
 \begin{align*}
& p_{a} \circ \iota_\alpha = \frac{\partial \alpha}{\partial x^{a}}, && \pi_{i} \circ \iota_\alpha = \frac{\partial \alpha}{\partial \zx^{i}}.
\end{align*}
Thus, we see that
$$\Theta \circ \iota_\alpha =  \zx^{i}Q_{i}^{a}\frac{\partial \alpha}{\partial x^{a}} + \frac{1}{2!} \zx^{i}\zx^{j}Q_{ji}^{k}\frac{\partial \alpha}{ \partial \zx^{k}} + \frac{1}{3!}\zx^{i} \zx^{j} \zx^{k}Q_{kji} \,,$$
 so $\mathcal{D}_\za$ is a Dirac structure if and only if $\rmd_{E}\alpha + \Theta_{(3,0)} =0$.
 \end{example}

\subsection{The standard  quasi-cochain complex}
In light of Theorem \ref{thm:preCourant}, we see that given any pre-Courant algebroid we can construct a   quasi-cochain complex.
\begin{definition}
 The \emph{standard quasi-cochain complex} of a pre-Courant is defined as the quasi-cochain complex
$$\left(\A^{\bullet}(F_{2}), \: Q_{\Theta} = \{\Theta, - \}\right).$$
\end{definition}
The standard quasi-cochain complex is  \emph{not}, in general, a cochain complex as $Q^{2}_{\Theta}\neq 0$.  The above definition also holds equally as well for more general `non-homological Courant algebroids' (i.e., symplectic \emph{nearly} Lie 2-algebroids), but we will restrict our attention to pre-Courant algebroids. When we are dealing with genuine Courant algebroids  we have the \emph{standard cochain complex} as defined by Roytenberg \cite{Roytenberg:2002}. \par
 However, pre-Courant algebroids do have a well defined zeroth and first standard cohomology groups, as $Q_{\Theta}^{2}(f) =0$ for  all $f \in \A^{0}(F_{2}) \simeq C^{\infty}(M)$. These cohomology groups have the same interpretation as the standard cohomology of a Courant algebroid (cf. \cite{Roytenberg:2002}).   The zeroth standard cohomology group $H_{st}^{0}(E)$ consists of functions on $M$ that are constant along the leaves of the characteristic foliation, which is  defined  as the image of the anchor. The first standard cohomology group $H_{st}^{1}(E)$ consists of sections of $E$ that are in the kernel of the anchor, modulo $Q_\Theta$-exact sections. The higher cohomology groups are  not defined. \par
Recall that  associated with any pre-Courant algebroid we also have the \emph{na\"{\i}ve quasi-cochain complex}, see Definition \ref{def:naivequasicomplex}.  In order to reformulate the na\"{\i}ve quasi-cochain complex in terms of  symplectic nearly Lie 2-algebroids, we  first define the map
\begin{align*}
& \iota^\Theta : C^{\infty}(M) \times \A^{k}(F_{2}) \rightarrow \A^{k-1}(F_{2}),\\
& (f, X) \mapsto \iota^\Theta_f X := (-1)^{\widetilde{f}}\{ \{\Theta, f\}, X \}.
\end{align*}
We then define
$$\cL^{k}(F_{2}) := \left \{ \alpha \in \A^{k}(F_{1}) \subset \A^{k}(F_{2})~ | ~~ \iota^\Theta_f\alpha =0 ~ \textnormal{for all} ~ f \in C^{\infty}(M)  \right \}.$$
Note that $\cL^{1}(F_{2})$ consists of all  $\kappa \in \A^{1}(F_{1})$ such that $\{ \{\Theta, \kappa \}, f \} =0$ for all $f$. Thus, for $k=1$ we simply have the kernel of the anchor.
\begin{proposition}
Given any symplectic almost Lie 2-algebroid  $(F_2,\{\cdot,\cdot\},\Theta)$, the pair $(\cL^{\bullet}(F_{2}),~ Q_{\Theta})$ is a quasi-cochain complex.
\end{proposition}
\begin{proof}
We just need to show that $\cL^{\bullet}(F_{2})$ is closed with respect to $Q_{\Theta}$. Via direct application of the Jacobi identity, it is a straightforward exercise to show that
$$\iota^\Theta_{f}(Q_{\Theta}\alpha) + (-1)^{\widetilde{f}}Q_{\Theta}(\iota^\Theta_{f}\alpha) = \{ \{\Theta, \{ \Theta,f\}\}, \alpha \}.$$
Then we see that $\iota^\Theta$ and $Q_{\Theta}$ commute if and only if  we deal with a pre-Courant algebroid. Thus, if $\alpha$ is in $\cL^{\bullet}(F_{2})$ then so is $Q_{\Theta}\alpha$.
\end{proof}
\begin{remark}
It is important to note that we do \emph{not} obtain a quasi-cochain complex in this way for symplectic nearly Lie 2-algebroids.
\end{remark}
For the sake of simplicity and clarity, let us consider  pre-Courant algebroids in the category of purely even manifolds only. Much like the case of standard Courant algebroids we have the following theorem.
\begin{theorem}\label{thm:quasimap}
There exists a quasi-cochain map between the na\"{\i}ve quasi-cochain complex and the standard quasi-cochain complex associated with any pre-Courant algebroid for which the Grassmann parity and weight coincide (mod $2$).
\end{theorem}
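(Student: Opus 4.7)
The plan is to take as the quasi-cochain map the tautological inclusion
$$\Phi: \mathcal{L}^{\bullet}(F_{2}) \longhookrightarrow \mathcal{A}^{\bullet}(F_{2}),$$
built from the chain $\mathcal{L}^{k}(F_{2}) \subset \mathcal{A}^{k}(F_{1}) \hookrightarrow \mathcal{A}^{k}(F_{2})$, where the second arrow is pullback along the canonical projection $\tau^{2}_{1}:F_{2}\to F_{1}$. By construction $\Phi$ preserves weight, so it is degree-preserving as a map of $\mathbb{Z}$-graded spaces, and well-definedness is automatic.

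What has to be verified is the intertwining $\Phi \circ Q_{\Theta} = Q_{\Theta} \circ \Phi$. This is essentially free: both quasi-complexes carry the \emph{same} odd degree-one operator $Q_{\Theta}=\{\Theta,\bullet\}$ coming from the ambient Poisson bracket on $F_{2}$, and for $\alpha \in \mathcal{L}^{k}(F_{2})$ the element $\{\Theta,\alpha\}$ is tautologically the same whether we regard $\alpha$ as a member of the na\"{\i}ve complex or of the standard complex via $\Phi$. Thus once one knows that $Q_{\Theta}\alpha$ still belongs to $\mathcal{L}^{k+1}(F_{2})$, the commuting square is automatic and one reads off that the inclusion is a map of quasi-cochain complexes.

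The stability $Q_{\Theta}(\mathcal{L}^{\bullet}(F_{2})) \subset \mathcal{L}^{\bullet}(F_{2})$ is precisely what the preceding proposition supplies, via the Cartan-type identity
$$\iota_{f}(Q_{\Theta}\alpha) + (-1)^{\widetilde{f}}\, Q_{\Theta}(\iota_{f}\alpha) = \{\{\Theta,\{\Theta,f\}\},\alpha\},$$
together with the pre-Courant condition $\{\{\Theta,\Theta\},f\}=0$ (equivalently $\{\Theta,\{\Theta,f\}\}=0$). So the theorem reduces to bookkeeping once that proposition is in hand; there is no real obstacle. The one point worth underlining is that this argument genuinely uses the pre-Courant hypothesis: for a merely symplectic nearly Lie 2-algebroid the right-hand side of the above identity need not vanish, $\mathcal{L}^{\bullet}$ need not be $Q_{\Theta}$-stable, and the candidate inclusion fails to be a chain map. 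This is exactly the content of the remark immediately preceding the theorem, and makes precise the sense in which the pre-Courant condition is not only sufficient but essential for the construction.
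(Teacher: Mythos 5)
Your map connects the wrong pair of complexes: you have silently replaced the na\"{\i}ve quasi-cochain complex by $(\cL^{\bullet}(F_{2}),\, Q_{\Theta})$, whereas by Definition \ref{def:naivequasicomplex} the na\"{\i}ve complex is the classically defined pair $(C^{\bullet}(E),\, \mathcal{D})$, with $\mathcal{D}$ the Cartan-type alternating-sum operator built from the anchor and the pre-bracket. The proposition preceding the theorem establishes only that $\cL^{\bullet}(F_{2})$ is $Q_{\Theta}$-stable; it does \emph{not} identify $(\cL^{\bullet}(F_{2}), Q_{\Theta})$ with $(C^{\bullet}(E), \mathcal{D})$. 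That identification is precisely the content of the theorem, so taking it as the starting point begs the question: once it is granted, your ``tautological inclusion'' argument is of course immediate, but nothing is left to prove.

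Concretely, two things need checking. First, the module homomorphism $\phi: C^{k}(E) \rightarrow \cL^{k}(F_{2})$, which amounts to matching the conditions $i_{\mathcal{D}f}\Psi = 0$ and $\iota_{f}\alpha = 0$; this part is routine and is implicit in what you wrote. Second --- and this is the real work, entirely absent from your proposal --- one must show that $\phi$ intertwines the classical operator $\mathcal{D}$ with $Q_{\Theta} = \{\Theta, \bullet\}$. For $k=1$, after identifying $\Psi \in \Sec(E^{*})$ with a section of $E$ via the pairing, this comes down to verifying
\begin{equation*}
\langle \Psi \circ e_{1}, e_{2}\rangle \;=\; \mathcal{D}\Psi(e_{1}, e_{2}),
\end{equation*}
where the left-hand side corresponds under $\phi$ to the derived-bracket expression $\{\{\{\Theta,\Psi\},e_{1}\},e_{2}\}$ and the right-hand side is the Cartan formula. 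The difference of the two sides is computed using axioms \eqref{def:preC2} and \eqref{def:preC3} of Definition \ref{def:precal} and reduces to $\rho(\Psi)\langle e_{1}, e_{2}\rangle$, which vanishes exactly because $\Psi \in C^{1}(E)$ forces $\rho(\Psi)=0$; the general $k$ then follows by the Leibniz rule. Without this step your commuting square has no content. Your closing remark about the necessity of the pre-Courant condition is correct, but it pertains to the $Q_{\Theta}$-stability of $\cL^{\bullet}(F_{2})$ --- i.e.\ to the preceding proposition --- not to the theorem itself.
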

\begin{proof}
Clearly, we have a module homomorphism between $\Sec(\wedge^{k} E^{*})$ and $\A^{k}(F_{1})$. Moreover the conditions $i_{\mathcal{D}f}\Psi =0$ and $\iota^\Theta_{f}\alpha=0$ (where $\alpha$ is  the image of $\Psi$ under the  aforementioned  module homomorphism) are equivalent and so we obtain a module homomorphism
$$\phi: C^{k}(E) \rightarrow \cL^{k}(F_{2}).$$
The only thing to prove is that $\mathcal{D}$ and $Q_{\Theta}$ are $\phi$-related. However, this follows from the same arguments given by  Sti\'{e}non \& Xu \cite{Stienon:2008} with only minor changes due to the fact that we use a non-skewsymmetric pre-bracket.  It is sufficient to consider the k=1 case and identify $\Psi \in \Sec(E^{\ast})$ as an element of $\Sec(E)$ using the pseudo-Euclidean structure. The general case follows from the Leibniz rule and linearity.  Now, it is a simple exercise to show that
\begin{align*}
\langle \Psi \circ e_{1}, e_{2} \rangle -  \mathcal{D}\Psi(e_{1},e_{2}) & = \rho(e_{2}) \langle \Psi, e_{1}\rangle -  \langle e_{2}\circ \Psi, e_{1}\rangle - \langle \Psi, e_{2}\circ e_{1}\rangle\\
&  + \langle \mathcal{D}\langle e_{2}, \Psi\rangle, e_{1} \rangle - \rho(e_{1})\langle \Psi, e_{1}\rangle + \langle \Psi, \mathcal{D}\langle e_{1}, e_{2}\rangle \rangle,
\end{align*}
where we have used $e_{1}\circ e_{2} + e_{2}\circ e_{1} = \mathcal{D}\langle e_{1},e_{2} \rangle $. Then, using the definition of $i^\Theta_{\mathcal{D}f} \Psi$, we obtain
$$\langle \Psi \circ e_{1}, e_{2} \rangle -  \mathcal{D}\Psi(e_{1},e_{2})  = \rho(e_{2}) \langle \Psi, e_{1}\rangle -  \langle e_{2}\circ \Psi, e_{1}\rangle - \langle \Psi, e_{2}\circ e_{1}\rangle + \rho(\Psi)\langle e_{1}, e_{2} \rangle.$$
 As $\Psi \in C^{1}(E)$, the final term of the above vanishes. Note that under the module homomorphism $\phi$ we have the identification (with minor abuse of notation)
$$ \langle \Psi \circ e_{1}, e_{2} \rangle \simeq  \{\{\{\Theta, \Psi \},e_{1}\},e_{2}\}.$$
This we can identify the two quasi-differentials, i.e.,
$$\mathcal{D}\Psi   \simeq \{\Theta, \Psi \},$$
and thus we have a quasi-cochain map.
\end{proof}
From the expression for the Jacobiator \eqref{eqn:Jacobi}, it is clear that if
 \begin{equation}\label{eqn:VanishingJac}
 J_{\Theta}(\kappa, -, -)=0,
 \end{equation}
  for all $\kappa \in \cL^{1}(F_{2}) \subset \A^{1}(F_{1})$, then $(\cL^{\bullet}(F_{2}), Q_{\Theta})$ is a cochain complex. The resulting cohomology, following  \cite{Liu:2016}, we refer to as the \emph{na\"{\i}ve cohomology}. Note that we do not, in general, require that $Q_{\Theta}^{2}=0$ on the whole of $\A^{\bullet}(F_{2})$, and so we are not forced to consider Courant algebroids.\par
  The condition on the Jacobiator \eqref{eqn:VanishingJac} is quite restrictive, but not prohibitively so. In particular,  twisted Courant algebroids (see \cite{Hansen:2010}) are natural examples of pre-Courant algebroids that have a well-defined na\"{\i}ve cohomology.


\section{Pre-Courant algebroids with an additional grading }\label{sec:NGrading}
\subsection{Weighted pre-Courant algebroids}
Following the  definition of a weighted (almost) Lie algebroid (Definition \ref{def:weightedalmostLie}), the notion of a  weighted pre-Courant algebroid is intuitively  clear.  In order to encode this structure in an economical way, we take the  following definition.
\begin{definition}\label{def:weightedPreC}
A \emph{weighted pre-Courant algebroid} of degree $k$ is a symplectic almost Lie 2-algebroid equipped with an additional compatible homogeneity structure of degree $k-1$.
\end{definition}
\noindent To be more explicit, we have:
\begin{enumerate}
\item A double graded (super) bundle $(F :=  F_{(k-1,2)} , \:  \rml, \:\rmh)$, where the homogeneity structures are of degree $k-1$ and $2$ respectively.
\item A nondegenerate Poisson bracket $\{\cdot,\cdot\}$ on the double graded bundle of bi-weight $(1-k, {-}2)$.
\item A pre-homological potential $\Theta \in \A^{(k-1,3)}(F)$, such that $\{\{\Theta, \Theta\}, f\} =0$ for all $f \in \A^{(\bullet , 0)}(F)$.
\end{enumerate}
The construction of the Courant--Dorfman pre-bracket, the anchor map and the pairing remains formally the same as before (see Subsection \ref{subsec:SymplecticAlmost}). However, we now understand $\A^{(\bullet, 1)}(F)$ to be closed under the Courant--Dorfman pre-bracket. Note that $\A^{(\bullet, 1)}(F)$ has a module structure over $\A^{(\bullet , 0)}(F)$, and thus we can identify $\A^{(\bullet, 1)}(F)$ with the sections of the graded vector bundle $\pi :F_{(k-1,1)} \rightarrow F_{(k-1, 0)}$, identified with its dual with the use of the pseudo-Euclidean structure. Just by counting the weights, we see that
\begin{equation}\label{eqn:weightedClosure}
\SN{\A^{(p,1)}(F) , \A^{(q,1)}(F)}_{\Theta} \subset \A^{(m,1)}(F),
\end{equation}
where $m = p+q-k+1$. The anchor can be considered as a  map
$$\rho_{\Theta} : \A^{(p,1)}(F) \rightarrow \Vect(F_{(k-1,0)}),$$
where  $\rho_\Theta(\sigma)$ is homogeneous and of weight $p+1-k$.  The pseudo-Euclidean structure can be considered as   a map
$$\langle \: , \:  \rangle : \A^{(p,1)}(F) \times  \A^{(q,1)}(F) \rightarrow \A^{(m,0)}(F) ,$$
where  again $m = p+q-k +1$.  Note that if $p+q-k <1$, then both the Courant--Dorfman pre-bracket and the pairing take the value zero. Extension to inhomogeneous sections is via linearity.\par
We can always employ affine Darboux coordinates $\big(x^{A}_{w}, p_{B}^{w}, \zx^{I}_{w}\big)$ on $F_{(k-1,2)}$. Notationally, we understand the additional label to represent the weight with respect to the homogeneity structure $\rml : \mathbb{R} \times F \rightarrow F$, thus $0 \leq w < k$. Because we have employed the phase lift, $x^{A}_{w}$ and $p_{B}^{k-w-1}$ are conjugate coordinates. It is clear that the pre-homological potential must be of the form
$$\Theta = \zx^{I}_{w}Q_{I}^{A}[w'-w](x)p_{A}^{k-w'-1} +\frac{1}{3!} \zx^{I}_{w}\zx^{J}_{w'}\zx^{K}_{w''}Q_{KJI}[k-1-w-w'-w''](x).$$
Here $Q^{A}_{I}[\bullet]$  and $Q_{KJI}[\bullet]$ are the homogeneous pieces of the structure functions. Any expression that is seemingly negative in weight is understood to be zero.  Via inspection, we arrive at the following proposition.
\begin{proposition}\label{prop:Wanchor}
The anchor map of a weighted pre-Courant algebroid can be considered as a morphism in the category of double graded (super)bundles
$$\rho_{\Theta} : F_{(k-1,1)} \rightarrow \Pi \sT F_{(k-1,0)},$$
where the homogeneity structures on the antitangent bundle arise from the tangent lift of the  homogeneity structure on $F_{(k-1,0)}$  and the natural regular homogeneity structure associated with any tangent bundle.
\end{proposition}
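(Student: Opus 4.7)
The plan is to decompose the argument into two parts: first, establish that $\rho_\Theta$ is $C^{\infty}(F_{(k-1,0)})$-linear in its section argument, which promotes it from a map of sections to a vector bundle morphism; second, verify that this morphism respects both of the homogeneity structures on the target. The parity shift $\Pi$ on the target enters because sections of $F_{(k-1,1)}$ correspond to Grassmann-\emph{odd} functions on $F$ of $\rmh$-weight $1$.

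For $C^{\infty}(F_{(k-1,0)})$-linearity, I would start from the derived-bracket formula $\rho_\Theta(\sigma)f = (-1)^{\widetilde{\sigma}}\{\{\Theta,\sigma\},f\}$ and apply the graded Leibniz rule. For $h\in \A^{(\bullet,0)}(F)$ and $\sigma\in \A^{(\bullet,1)}(F)$ one computes
\begin{equation*}
\{\{\Theta,h\sigma\},f\} = \{\{\Theta,h\}\sigma\pm h\{\Theta,\sigma\},f\}
\end{equation*}
and then expands each inner bracket. The auxiliary brackets $\{h,f\}$, $\{\sigma,f\}$ and $\{\{\Theta,h\},f\}$ all vanish for $\rmh$-weight reasons, since $\{\,,\,\}$ has $\rmh$-weight $-2$ while the factors involved carry $\rmh$-weights in $\{0,1\}$. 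What survives is $h\,\rho_\Theta(\sigma)f$ up to a sign. This $C^{\infty}$-linearity is precisely what promotes $\rho_\Theta$ to a vector bundle morphism; combined with the pseudo-Euclidean self-duality of $F_{(k-1,1)}$ and the parity observation above, we obtain a morphism of (graded) vector bundles $F_{(k-1,1)}\to \Pi\sT F_{(k-1,0)}$ covering the identity on $F_{(k-1,0)}$.

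Next I would check compatibility with the two homogeneity structures on $\Pi\sT F_{(k-1,0)}$, working in the affine Darboux coordinates $(x^A_w,\,p_B^{k-w-1},\,\xi^I_w)$ and using the explicit form
\begin{equation*}
\Theta = \xi^I_w Q_I^A[w'-w](x)\,p_A^{k-w-1} + \tfrac{1}{3!}\,\xi^I_w \xi^J_{w'}\xi^K_{w''}Q_{KJI}[k-1-w-w'-w''](x).
\end{equation*}
Only the first summand contributes to $\rho_\Theta(\xi^I_w)$, and a direct computation gives something of the schematic form $g^{IJ}\,Q_J^A[w'-w](x)\,\partial/\partial x^A_{w'}$, which transcribes under the identification $\partial/\partial x^A_{w'}\leftrightarrow \Pi\,\xd x^A_{w'}$. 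Regular-homogeneity compatibility is then automatic: $\xi^I_w$ has $\rmh$-weight $1$, and $\Pi\,\xd x^A_{w'}$ has weight $1$ in the tangent-fibre scaling, so the morphism is linear in the fibre direction. Compatibility with the tangent-lifted $\rml$-structure is the weight-matching $w = (w'-w) + (2w-w')$ read off as: $Q_I^A[w'-w]$ contributes $\rml$-weight $w'-w$, while $\Pi\,\xd x^A_{w'}$ carries tangent-lifted $\rml$-weight $w'$, so their product has $\rml$-weight $w$, exactly matching that of the source variable $\xi^I_w$.

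The only place where genuine care is needed is the index-bookkeeping in the second step; conceptually nothing new happens, because the weight constraints on $Q_I^A[\,\cdot\,]$ were built into $\Theta$ precisely so that $\Theta\in \A^{(k-1,3)}(F)$. The main obstacle is therefore purely combinatorial, and once the coordinate expression for $\rho_\Theta(\xi^I_w)$ is written out both homogeneity preservations become manifest. Together with the $C^{\infty}$-linearity from step one, this establishes that $\rho_\Theta$ is a morphism in the category of double graded (super) bundles, as claimed.
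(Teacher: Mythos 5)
The paper itself offers nothing beyond ``via inspection'' of the local form of $\Theta$, so you are supplying detail the authors omit; your two-step strategy (first $C^{\infty}(F_{(k-1,0)})$-linearity of $\rho_\Theta$ in its argument, then a coordinate check of the two gradings) is the natural way to make that inspection precise. Your first step is correct: each auxiliary bracket you list vanishes because $\{\,,\,\}$ has $\rmh$-weight $-2$ while $h,f$ have $\rmh$-weight $0$ and $\sigma,\{\Theta,h\}$ have $\rmh$-weight $1$, so only $\pm h\{\{\Theta,\sigma\},f\}$ survives. This is a worthwhile addition, since it is exactly what turns the anchor from a map on sections into a bundle map.

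The second step, however, is not internally consistent as written, precisely at the spot you flag as needing care. The identification $\A^{(\bullet,1)}(F)\simeq\Sec(F_{(k-1,1)})$ goes through the metric, which pairs $\zx$'s of \emph{complementary} $\rml$-weights: $\{\zx^I_w,\zx^J_{w''}\}=g^{IJ}$ is nonzero only for $w+w''=k-1$. Hence if $\zx^I_w$ denotes the bi-weight-$(w,1)$ \emph{function} fed into the derived bracket, the surviving component in $\rho_\Theta(\zx^I_w)=\pm g^{IJ}Q^A_J[\,\cdot\,]\,\partial/\partial x^A_{w'}$ is $Q^A_J[w'+w+1-k]$ with $J$ of weight $k-1-w$, not $Q^A_J[w'-w]$, and the resulting vector field has $\rml$-weight $w+1-k$, consistent with the paper's earlier assertion that a weight-$p$ section maps to a weight-$(p+1-k)$ vector field. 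The label $[w'-w]$ is correct only in the dual, point-level picture, where the pullback of the fibre coordinate $\xd x^A_{w'}$ of bi-weight $(w',1)$ is $Q^A_I[w'-w]\,\zx^I_w$ (no $g^{IJ}$), visibly again of bi-weight $(w',1)$ --- and that is the cleanest way to finish. Your displayed balance $w=(w'-w)+(2w-w')$ assigns $\Pi\,\xd x^A_{w'}$ the weight $2w-w'$ while your prose assigns it $w'$, and neither version reproduces a correct identity as stated; redo this line entirely in one picture (sections and vector fields, or points and pullbacks of coordinates) and the two homogeneity preservations come out correctly. The conclusion is right and the method works once this bookkeeping is fixed.
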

The degree $1$ case is just the theory of symplectic almost Lie 2-algebroids, and so via Theorem \ref{thm:preCourant} covers the standard theory of pre-Courant algebroids. The degree $2$ case covers  VB-Courant algebroids (cf. \cite{Li-Bland:2012}).
\begin{theorem}\label{thm:highertangentlift}
Let $(F_{2}, \rmh,  \{~ ,~\}, \Theta)$ be a pre-Courant algebroid. Then the higher tangent bundle $\sT^{k-1}F_{2}$ is canonically a weighted pre-Courant algebroid of degree $k$.
\end{theorem}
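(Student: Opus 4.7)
The plan is to transport every ingredient of $(F_{2}, \rmh, \{\cdot,\cdot\}, \Theta)$ to $\sT^{k-1}F_{2}$ via the iterated tangent functor, which is both a symplectic and a graded operation, and then reduce the weak master equation on the lift to the weak master equation downstairs. First, $\sT^{k-1}F_{2}$ carries two canonical, commuting homogeneity structures: the natural one $\tilde\rml$ of degree $k-1$ coming from the tower of tangent lifts (so a weight-$w$ coordinate on $F_{2}$ generates coordinates on $\sT^{k-1}F_{2}$ of first-weight $0,1,\ldots,k-1$ and second-weight $w$), together with the $(k-1)$-fold tangent lift $\tilde\rmh := \sT^{k-1}\rmh$ of degree $2$. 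These commute by functoriality of $\sT^{k-1}$, giving the required double graded (super) bundle $F_{(k-1,2)}$.

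Next, I would tangent-lift the symplectic structure. The iterated complete lift $\omega^{(k-1)}$ of the symplectic form $\omega$ on $F_{2}$ is a symplectic form on $\sT^{k-1}F_{2}$ of bi-weight $(k-1,2)$, so its Poisson bracket is non-degenerate of bi-weight $(1-k,-2)$, exactly as Definition \ref{def:weightedPreC} requires. I would then define the candidate pre-homological potential to be the top tangent lift $\Theta^{(k-1)}\in \A^{(k-1,3)}(\sT^{k-1}F_{2})$, which clearly has the right bi-degree by construction.

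The main step is to verify the weak master equation for $\Theta^{(k-1)}$. The basic lift-of-bracket identity on $\sT^{k-1}F_{2}$ reads $\{f^{(i)},g^{(j)}\} = \{f,g\}^{(i+j-(k-1))}$, where lifts of negative weight are declared to be zero; this is essentially the statement that the iterated tangent functor is a morphism of Poisson manifolds of weight $-(k-1)$. Specialising to $i=j=k-1$ gives
\[
\{\Theta^{(k-1)}, \Theta^{(k-1)}\} \;=\; \bigl(\{\Theta,\Theta\}\bigr)^{(k-1)}.
\]
Now, functions of second-weight $0$ on $\sT^{k-1}F_{2}$ are exactly the tangent lifts of functions on the base $M$, so it suffices to test against generators of the form $f^{(j)}$ with $f\in C^{\infty}(M)$ and $j=0,\ldots,k-1$. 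Applying the lift-of-bracket identity once more produces
\[
\{\{\Theta^{(k-1)}, \Theta^{(k-1)}\}, f^{(j)}\} \;=\; \bigl(\{\{\Theta,\Theta\}, f\}\bigr)^{(j)} \;=\; 0,
\]
because $(F_{2},\Theta)$ was pre-Courant to begin with. Extending by $C^{\infty}(\sT^{k-1}M)$-linearity handles arbitrary weight-$(\bullet,0)$ test functions.

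The main obstacle, and where care is required, is the careful bookkeeping of complete lifts in the bi-graded super-setting and establishing the lift-of-bracket formula used above (including non-degeneracy of $\omega^{(k-1)}$ as a bi-graded symplectic form); once that framework is in place, the verification is purely functorial. Canonicity of the construction is automatic because every ingredient is produced from $\sT^{k-1}$ applied to the original data, so morphisms of pre-Courant algebroids tangent-lift to morphisms of weighted pre-Courant algebroids.
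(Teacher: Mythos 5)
Your proposal is correct and follows essentially the same route as the paper: both establish the double graded bundle structure, lift the Poisson bracket so that it has bi-weight $(1-k,-2)$, take the complete lift $\Theta^{(c)}$ of bi-weight $(k-1,3)$, and reduce everything to the identity $\{\Theta^{(c)},\Theta^{(c)}\}=\{\Theta,\Theta\}^{(c)}$ together with the weak master equation downstairs. The only (cosmetic) difference is the last step -- you test against the generators $f^{(j)}$ using the lift-of-bracket formula (where the extension to all of $\mathcal{A}(\sT^{k-1}M)$ is really by the Leibniz rule rather than linearity), whereas the paper observes that $\{\Theta,\Theta\}$ is independent of the momenta, hence so is its lift, so no conjugate variables survive -- but these are the same argument in different clothing.
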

\begin{proof}
Clearly, $\sT^{k-1} F_{2}$ is a double graded bundle where the homogeneity structures are the natural one on a higher tangent bundle and the prolongation of the homogeneity structure on $F_{2}$ (see Example \ref{exm:highertangent}). Thus we can employ natural local coordinates
$$(\underbrace{x^{a}_{\epsilon}}_{(\epsilon, 0)}, \: \underbrace{\zx^{i}_{\epsilon}}_{(\epsilon, 1)}, \: \underbrace{p_{b}^{\epsilon}}_{(\epsilon , 2)}),$$
on $\sT^{k-1}F_{2}$. Here $0\leq \epsilon < k$ denotes the weight which we understand as coming from the `number of derivatives' of the original coordinates on $F_{2}$.  The Poisson bracket also lifts (see \cite{KouotchopWamba:2012}) with the conjugate variables being $(x_{\epsilon}^{a}, p_{b}^{\delta})$ and $(\zx^{i}_{\epsilon}, \zx^{j}_{\delta})$, where $\delta = \epsilon -k+1$. Thus we have a non-degenerate Poisson bracket of bi-weight $(1-k,-2)$. The higher tangent prolongation  (or complete lift cf. \cite{Morimoto:1970}) of $\Theta$ to a function on $\sT^{k-1}F_{2}$, which we denote as $\Theta^{(c)}$, is formally constructed in the same way as the classical case on purely even manifolds  by taking derivatives. It is not hard to see that $\Theta^{(c)}$ is of bi-weight $(k-1,3)$. Moreover, it follows that (again see \cite{KouotchopWamba:2012})
$$\{\Theta^{(c)}, \Theta^{(c)} \}_{\sT^{k-1}F_{2}} =  \{\Theta, \Theta  \}^{(c)}.$$
Note that $\{ \{\Theta, \Theta \} \}, f_{0}\} =0$ for any $f_{0} \in C^{\infty}(M)$ implies that $\{ \Theta, \Theta\}$ is independent of the momenta: this is equivalent to the structure equations of the pre-Courant algebroid. Thus, we deduce that $\{ \Theta^{c} , \Theta^{c} \}_{\sT^{k-1}F_{2}}$ is similarly independent of the coordinates $p^{\epsilon}$,  that is we have a function on $\sT^{k-1} F_{1}$. We then see that
$$\{ \{\Theta^{(c)}, \Theta^{(c)}\}_{\sT^{k-1}F_{2}} , f \}_{\sT^{k-1}F_{2}} =0,$$
for all $f \in \mathcal{A}(\sT^{k-1}M)$, due to the lack of conjugate variables and conclude that we  have the structure of a weighted pre-Courant algebroid.
\end{proof}
Two immediate consequences of Theorem \ref{thm:highertangentlift} are the following.

 \begin{corollary}
\
 \begin{enumerate}
 \item The higher tangent bundle of order $k-1$ of a Courant algebroid is a weighted Courant algebroid of degree $k$;
 \item The tangent bundle of a (pre-)Courant algebroid is a weighted (pre-)Courant algebroid of degree $2$.
 \end{enumerate}
 \end{corollary}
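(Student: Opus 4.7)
The plan is to derive both parts of the corollary as direct specialisations of Theorem \ref{thm:highertangentlift}, leveraging only the compatibility of complete lifts with the Poisson bracket that was already used in its proof.

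For part (1), I would begin by recalling that a Courant algebroid is nothing more than a pre-Courant algebroid satisfying the full master equation $\{\Theta,\Theta\}=0$, as opposed to the weaker condition $\{\{\Theta,\Theta\},f\}=0$ which defines the symplectic almost Lie 2-algebroid structure. Theorem \ref{thm:highertangentlift} already supplies $\sT^{k-1}F_{2}$ with a weighted pre-Courant algebroid structure of degree $k$, with lifted Poisson bracket of bi-weight $(1-k,-2)$ and pre-homological potential $\Theta^{(c)}$. To promote this to a weighted Courant algebroid it suffices to check the honest master equation $\{\Theta^{(c)},\Theta^{(c)}\}_{\sT^{k-1}F_{2}}=0$. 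This is immediate from the identity
$$\{\Theta^{(c)},\Theta^{(c)}\}_{\sT^{k-1}F_{2}} \;=\; \{\Theta,\Theta\}^{(c)},$$
which was established in the course of proving Theorem \ref{thm:highertangentlift}, together with the obvious fact that the complete lift of the zero function is zero. Thus the Courant hypothesis on $F_{2}$ passes to the corresponding hypothesis on $\sT^{k-1}F_{2}$.

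For part (2), I would simply specialise Theorem \ref{thm:highertangentlift} to $k=2$, so that $\sT^{k-1}F_{2}=\sT F_{2}$, and read off the weighted pre-Courant structure of degree $2$; if one moreover starts from a genuine Courant algebroid, part (1) upgrades the conclusion to a weighted Courant algebroid of degree $2$.

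I do not expect any serious obstacle here, since both statements are essentially bookkeeping on top of Theorem \ref{thm:highertangentlift}. The only point worth articulating carefully is the distinction between the two levels of the master equation: the weighted pre-Courant condition $\{\{\Theta^{(c)},\Theta^{(c)}\},f\}=0$ already holds for every $f\in\mathcal{A}(\sT^{k-1}M)$ by Theorem \ref{thm:highertangentlift}, whereas in the Courant case the vanishing of $\{\Theta,\Theta\}$ lifts verbatim to the vanishing of $\{\Theta^{(c)},\Theta^{(c)}\}_{\sT^{k-1}F_{2}}$, so no additional structural computation is required.
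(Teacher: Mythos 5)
Your proposal is correct and follows exactly the route the paper intends: the corollary is stated without proof as an immediate consequence, and the only substantive ingredient is the identity $\{\Theta^{(c)},\Theta^{(c)}\}_{\sT^{k-1}F_{2}}=\{\Theta,\Theta\}^{(c)}$ already established in the proof of Theorem \ref{thm:highertangentlift}, which you invoke to pass from $\{\Theta,\Theta\}=0$ to $\{\Theta^{(c)},\Theta^{(c)}\}=0$. Part (2) is, as you say, just the specialisation $k=2$, so nothing further is needed.
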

 \begin{remark}
 As far as we are aware, the first proof that  the tangent bundle of a Courant algebroid canonically comes with the structure of a VB-Courant algebroid was given by  Boumaiza \& ~Zaalani \cite{Boumaiza:2009} in the standard classical  framework. Li-Bland  in his Ph.D. thesis \cite{Li-Bland:2012} sketched a proof of this in the language of supermanifolds.
\end{remark}

\subsection{VB-Courant algebroids revisited}
As a degree $1$ (and so regular) homogeneity structure encodes a vector bundle structure (see \cite{Grabowski:2009}), the natural definition of a  VB-pre-Courant algebroid is as follows.
\begin{definition}\label{def:VBpreC}
A \emph{VB-pre-Courant algebroid} is a weighted pre-Courant algebroid  of degree $k=2$ (see Definition \ref{def:weightedPreC}).
\end{definition}
 Restricting our attention to the $k=2$ case, within $\A^{(\bullet,1)}(F)$ there are two privileged modules over $\A^{(0,0)}(F)$,  $\A^{(1,1)}(F)$  and $\A^{(0,1)}(F)$. In the standard formulation of double vector bundles, $\A^{(1,1)}(F)$ corresponds to   \emph{linear sections},  and $\A^{(0,1)}(F)$ corresponds to \emph{core sections} of the dual of the double vector bundle $F_{(1,1)}$ over $F_{(1,0)}$. By counting the weights (see \eqref{eqn:weightedClosure}) we observe  that
 \begin{align*}
 & \SN{\A^{(1,1)}(F), \A^{(1,1)}(F)}_{\Theta} \subset \A^{(1,1)}(F), && \SN{\A^{(1,1)}(F), \A^{(0,1)}(F)}_{\Theta} \subset \A^{(0,1)}(F), & \SN{\A^{(0,1)}(F), \A^{(0,1)}(F)}_{\Theta} =0.
 \end{align*}
 Thus, the Courant--Dorfman pre-bracket is linear. Moreover, Proposition \ref{prop:Wanchor} implies that the anchor, thought of as a map  $F_{(1,1)} \rightarrow \sT F_{(1,0)}$, is also linear. One can also directly observe that
 \begin{align*}
 & \langle \A^{(1,1)}(F), \A^{(1,1)}(F) \rangle \subset \A^{(1,0)}(F), && \langle \A^{(1,1)}(F), \A^{(0,1)}(F) \rangle \subset \A^{(0,0)}(F), & \langle \A^{(0,1)}(F), \A^{(0,1)}(F) \rangle =0,
 \end{align*}
 which by Gracia-Saz \& Mehta \cite[Lemma 2.8]{Gracia-Saz:2009} implies that the pairing is also linear.\par
Thus our definition of a VB-pre-Courant algebroid gives rise to what one would expect in a more traditional language by weakening the classical axioms of a VB-Courant algebroid, see Li-Bland \cite{Li-Bland:2012}. In particular,  \cite[Proposition 3.2.1]{Li-Bland:2012}  essentially leads to our \emph{definition} of a VB-(pre-)Courant algebroid, once homogeneity structures are employed to describe vector bundles.

\subsection{Weighted sub-Dirac structures}
Following Definition \ref{def:Dirac}, the notion of a weighted sub-Dirac structure is clear in terms of a sub-Dirac structure with a compatible homogeneity structure. More formally, we have the following definition.
\begin{definition}
Let $(F_{(k-1,2)} , \rml , \rmh, \{~,~\} , \Theta)$ be a weighted pre-Courant algebroid of degree $k$. A \emph{weighted sub-Dirac structure} of degree $k$ is a double graded subbundle $\iota : \mathcal{D}_{(k-1,2)}\hookrightarrow F_{(k-1,2)}$, possibly supported over $F_{0 (k-1,0)} :=  \rmh_{0} (D_{(k-1,2)})$ and $F_{0(0,2)} := \rml_{0} (D_{(k-1,2)})$,
\begin{center}
\leavevmode
\begin{xy}
(0,15)*+{\mathcal{D}_{(k-1,2)}}="a"; (25,15)*+{F_{0(0,2)}}="b";%
(0,0)*+{F_{0(k-1,0)}}="c"; (25,0)*+{M_{0}}="d";%
{\ar "a";"b"}?*!/^3mm/{};%
{\ar "a";"c"}?*!/^3mm/{};{\ar "b";"d"}?*!/_3mm/{};%
{\ar "c";"d"} ?*!/^3mm/{};%
(55,15)*+{F_{(k-1,2)}}="e"; (80,15)*+{F_{(0,2)}}="f";%
(55,0)*+{F_{(k-1,0)}}="g"; (80,0)*+{M}="h";%
{\ar "e";"f"}?*!/^3mm/{};%
{\ar "e";"g"}?*!/^3mm/{};{\ar "f";"h"}?*!/_3mm/{};%
{\ar "g";"h"} ?*!/^3mm/{};%
(30,7,5)*+{}="i"; (50,7.5)*+{}="j";%
{\ar@{^{(}->} "i";"j"} ?*!/^3mm/{};%
\end{xy}
\end{center}
\smallskip
such that:
\begin{enumerate}
\item $\mathcal{D}_{(k-1,2)}$  is isotropic;
\item $ Q_{\Theta}$ is tangent to $\mathcal{D}_{(k-1,2)}$.
\end{enumerate}
If $\mathcal{D}_{(k-1,2)}$ is a  Lagrangian submanifold, then we speak of a \emph{weighted Dirac structure}.
\end{definition}
\begin{remark}
The notion of a sub-VB-Dirac structure is clear.  VB-Dirac structures, to our knowledge, first appear in \cite{Li-Bland:2012} in the more traditional framework of double vector bundles.
\end{remark}
\begin{proposition}
The graded vector bundle $L_{(k-1,1)}$ defined by $\Sec(L_{(k-1,1)}) \simeq \mathcal{A}^{(\bullet, 1)}(\mathcal{D}_{(k-1,2)})$ is a weighted almost Lie algebroid of degree $k$.
\end{proposition}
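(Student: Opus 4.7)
The approach is to reduce to the non-weighted case by forgetting the additional homogeneity structure $\rml$, apply Proposition \ref{prop:Dirac} to obtain the almost Lie algebroid structure, and then verify that this structure is homogeneous with respect to $\rml$ using the bi-weight counting already recorded in equation \eqref{eqn:weightedClosure} and Proposition \ref{prop:Wanchor}. Since $\mathcal{D}_{(k-1,2)}$ is a double graded subbundle, both homogeneity structures restrict from $F_{(k-1,2)}$; the restriction of $\rmh$ remains of degree $2$, while its weight-one component yields a vector bundle $L_{(k-1,1)} \rightarrow F_{0(k-1,0)}$ carrying an inherited $\rml$-grading of degree $k-1$, whose graded sections are exactly $\mathcal{A}^{(\bullet,1)}(\mathcal{D}_{(k-1,2)})$.

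First I would forget the $\rml$-homogeneity structure altogether. In this picture $\mathcal{D}_{(k-1,2)} \hookrightarrow F_{(k-1,2)}$ is a sub-Dirac structure of the underlying pre-Courant algebroid, since isotropy and tangency of $Q_{\Theta}$ are unaffected by discarding the extra grading. Proposition \ref{prop:Dirac} then supplies the derived anchor $\rho_{\Theta}$ and derived bracket $\SN{\cdot,\cdot}_{\Theta}$ on $\Sec(L_{(k-1,1)})$, satisfying the axioms of Definition \ref{def:almostLie}. The tangency of $Q_{\Theta}$ to $\mathcal{D}_{(k-1,2)}$ is precisely what ensures that $\{\Theta, \sigma\}$ restricts appropriately, so that the derived bracket indeed closes on $\mathcal{A}^{(\bullet,1)}(\mathcal{D}_{(k-1,2)})$.

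The second step is to check that this almost Lie algebroid is homogeneous of the correct degree. Because $\Theta$ has bi-weight $(k-1,3)$ and the Poisson bracket has bi-weight $(1-k,-2)$, a direct bi-weight count of $\SN{\sigma,\psi}_{\Theta} = (-1)^{\widetilde{\sigma}}\{\{\Theta,\sigma\},\psi\}$ gives
\[
\SN{\mathcal{A}^{(p,1)},\, \mathcal{A}^{(q,1)}}_{\Theta} \subset \mathcal{A}^{(p+q-k+1,\,1)},
\]
which is just \eqref{eqn:weightedClosure} specialised to $\mathcal{D}_{(k-1,2)}$. Under the convention of Definition \ref{def:weightedalmostLie}, where a function of bi-weight $(p,1)$ corresponds to a section of weight $p+1$, this reads precisely as: the bracket of two homogeneous sections of weights $r_{1}$ and $r_{2}$ has weight $r_{1}+r_{2}-k$. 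Compatibility of $\rho_{\Theta}$ with the $\rml$-grading is already recorded in Proposition \ref{prop:Wanchor}, which says $\rho_{\Theta}$ is a morphism of double graded bundles; this in particular implies the Leibniz-type compatibility between anchor and bracket is homogeneous.

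The only real obstacle is notational bookkeeping: one must be consistent in translating between bi-weights of functions on $F$ and weights of sections as in Definition \ref{def:weightedalmostLie}, and verify that restricting from $\mathcal{A}^{(\bullet,1)}(F)$ to $\mathcal{A}^{(\bullet,1)}(\mathcal{D}_{(k-1,2)})$ preserves the relevant graded module structure over $\mathcal{A}^{(\bullet,0)}(\mathcal{D}_{(k-1,2)})$. Once these are in place, the statement is a direct weighted refinement of Proposition \ref{prop:Dirac}, with essentially all of the computational content already absorbed into the bi-weight analysis of Section \ref{sec:NGrading}.
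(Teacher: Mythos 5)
Your proposal is correct and follows essentially the same route as the paper: forget the extra homogeneity structure $\rml$ to invoke Proposition \ref{prop:Dirac} for the underlying almost Lie algebroid, then use the bi-weight count of \eqref{eqn:weightedClosure} (a function of bi-weight $(p,1)$ corresponding to a section of weight $p+1$) to conclude the bracket has weight $-k$ as required by Definition \ref{def:weightedalmostLie}. The only difference is that you additionally cite Proposition \ref{prop:Wanchor} for the anchor's homogeneity, which the paper leaves implicit.
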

\begin{proof}
It is clear from Proposition \ref{prop:Dirac} that by forgetting the additional graded structure we get  an underlying almost Lie algebroid structure. The only question is the weight of the bracket on sections. A homogeneous section of weight $r$ of $L_{(k-1,1)}$ can be viewed as an element of $\mathcal{A}^{(r-1,1)}(\mathcal{D}_{(k-1,2)})$, and thus by \eqref{eqn:weightedClosure} the almost Lie bracket carries weight $-k$. Then, via Definition \ref{def:weightedalmostLie}, we have the structure of a weighted almost Lie algebroid.
\end{proof}
\begin{proposition}\label{prop:liftsubDirac}
Let $\iota: \mathcal{D}  \hookrightarrow F_{2}$ be a sub-Dirac structure of a pre-Courant algebroid. Then
 $$\sT^{k-1}\iota:  \sT^{k-1}\mathcal{D} \hookrightarrow \sT^{k-1}F_{2}$$
  is a weighted sub-Dirac structure of degree $k$.
\end{proposition}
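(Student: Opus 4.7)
The plan is to verify the three ingredients in the definition of a weighted sub-Dirac structure separately, in each case exploiting the functoriality of the higher tangent functor $\sT^{k-1}$ together with the standard formalism of complete (tangent) lifts. First, since $\iota:\mathcal{D}\hookrightarrow F_{2}$ is an embedding of (graded) supermanifolds, applying $\sT^{k-1}$ gives an embedding $\sT^{k-1}\iota:\sT^{k-1}\mathcal{D}\hookrightarrow\sT^{k-1}F_{2}$. The two homogeneity structures on $\sT^{k-1}F_{2}$ described in the proof of Theorem \ref{thm:highertangentlift} -- the canonical one coming from the higher tangent functor and the prolongation of $\rmh$ -- both restrict to $\sT^{k-1}\mathcal{D}$ because they are produced by applying $\sT^{k-1}$ to (a) the identity and (b) the homogeneity $\rmh$ on $F_2$, and $\iota$ relates the latter to the induced homogeneity on $\mathcal{D}$. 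Thus $\sT^{k-1}\mathcal{D}$ is a double graded subbundle of the weighted pre-Courant algebroid $\sT^{k-1}F_{2}$.

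Next I would handle the isotropy condition. Let $\omega$ be the symplectic form on $F_{2}$ dual to the Poisson bracket $\{~,~\}$. The Poisson bracket on $\sT^{k-1}F_{2}$ used in Theorem \ref{thm:highertangentlift} is precisely the one dual to the complete lift $\omega^{(c)}$ of $\omega$ (see \cite{KouotchopWamba:2012}); equivalently, on complete lifts of functions one has $\{f^{(c)},g^{(c)}\}_{\sT^{k-1}F_{2}}=\{f,g\}^{(c)}$. The standard naturality of the complete lift with respect to pullbacks gives $(\sT^{k-1}\iota)^{*}\omega^{(c)}=(\iota^{*}\omega)^{(c)}$, and since $\mathcal{D}$ is isotropic the right-hand side vanishes. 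Hence $\sT^{k-1}\mathcal{D}$ is isotropic in $\sT^{k-1}F_{2}$.

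The tangency of the homological-type vector field is the step that requires a little more care, and I would regard it as the main point of the proof. The crucial identity is that the Hamiltonian vector field of the complete lift equals the complete lift of the Hamiltonian vector field, i.e.
\begin{equation*}
Q_{\Theta^{(c)}}=\{\Theta^{(c)},\bullet\}_{\sT^{k-1}F_{2}}=(Q_{\Theta})^{(c)},
\end{equation*}
which follows from the compatibility of the complete lift with Poisson brackets recalled above. Now it is a general fact about complete lifts of vector fields that if $X\in\Vect(F_{2})$ is tangent to a submanifold $N\subset F_{2}$, then $X^{(c)}\in\Vect(\sT^{k-1}F_{2})$ is tangent to $\sT^{k-1}N$ (this is seen by writing the lift in adapted coordinates, or functorially by noting that the flow of $X^{(c)}$ is $\sT^{k-1}$ of the flow of $X$, hence preserves $\sT^{k-1}N$). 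Applying this to $X=Q_{\Theta}$ and $N=\mathcal{D}$, and using the identity above, we conclude that $Q_{\Theta^{(c)}}$ is tangent to $\sT^{k-1}\mathcal{D}$.

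Combining the three observations shows that $\sT^{k-1}\mathcal{D}$ satisfies the two conditions of a weighted sub-Dirac structure of degree $k$ inside the weighted pre-Courant algebroid $\sT^{k-1}F_{2}$ of Theorem \ref{thm:highertangentlift}, completing the proof. The only genuine obstacle is the behaviour of Hamiltonian vector fields under the higher tangent lift; once the flow-theoretic characterisation of the complete lift is invoked, both isotropy and tangency reduce to functoriality of $\sT^{k-1}$ and the naturality of the lift with respect to pullbacks.
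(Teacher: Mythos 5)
Your proposal is correct and follows essentially the same route as the paper's proof: functoriality of $\sT^{k-1}$ for the subbundle structure, isotropy with respect to the lifted Poisson/symplectic structure, and the identity $Q_{\Theta^{(c)}}=(Q_{\Theta})^{c}$ from \cite{KouotchopWamba:2012} for tangency. You simply spell out the details (naturality of the complete lift under pullback, the flow-theoretic argument for tangency of lifted vector fields) that the paper leaves implicit.
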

\begin{proof}
Due to the functorial properties of taking the higher tangent bundle, it is clear that $\sT^{k-1}\mathcal{D}$ is a graded double subbundle of $\sT^{k-1}F_{2}$.  As $\mathcal{D}$ is isotropic, then so is $\sT^{k-1}\mathcal{D}$ with respect to the lifted Poisson structure. Similarly, as  Hamiltonian vector fields of the complete lifts of Hamiltonians are the complete lifts of the Hamiltonian vector fields, $Q_{\Theta^{(c)}} = (Q_{\Theta})^{c}$ (see \cite{KouotchopWamba:2012}), we see that the $Q_{\Theta^{(c)}}$ is tangent to $\sT^{k-1}\mathcal{D}$.
\end{proof}
\begin{remark}
Higher order tangent lifts of Dirac structures in $\sT M \times_{M}\sT^{*}M$ were  first explored by Kouotchop Wamba, Ntyam \& Wouafo Kamga \cite{KouotchopWamba:2011} in the classical language. Also, see Kouotchop Wamba \& Ntyam \cite{KouotchopWamba:2013}.
\end{remark}

\subsection{Examples  of  weighted pre-Courant algebroids}
In this subsection, we present a few simple examples of  weighted pre-Courant algebroids and sub-Dirac structures. These examples come from minor modifications of the standard examples of Courant algebroids and Dirac structures.
\begin{example}
Using Theorem \ref{thm:highertangentlift} and Proposition \ref{prop:liftsubDirac}, we can see that any pre-Courant algebroid and a sub-Dirac structure thereof, gives rise to a weighted pre-Courant algebroid and a weighted sub-Dirac structure via the higher order tangent functor. As a specific example,  consider $\sT^{*}\Pi \sT M$, where for simplicity  we assume $M$ is a  standard manifold, which canonically is a Courant algebroid. The Poisson structure is just the canonical Poisson structure on the cotangent bundle.  We need only remember that we assign weight $2$ to the `momenta'.  The homological potential is just the symbol (or Hamiltonian) of de Rham differential on $M$.  Obviously, $\Pi \sT M$  is a Dirac structure with the canonical Lie algebroid structure. Then,  $\sT^{k-1} (\sT^{*} \Pi \sT M) \simeq \sT^{*}\Pi \sT(\sT^{k-1}M)$ is canonically a weighted Courant algebroid of degree $k$, and $\Pi \sT(\sT^{k-1}M)$ is a weighted Dirac structure thereof.
\end{example}
\begin{example}
Consider $\sT^{*} F_{k-1}$, where $F_{k-1}$ is any degree $k-1$ graded  bundle, and assign weight $2$ to the `momenta'. That is, we can employ local coordinates of the form
$$(\underbrace{x^{a}}_{(0,0)}, \:  \underbrace{y^{i}_{w}}_{(w,0)},\:  \underbrace{p_{a}^{k-1}}_{(k-1,2)},\: \underbrace{q_{i}^{w}}_{(w,2)}),$$
where $(0 < w <  k)$. Note that the conjugate pairs here are $(x,p^{k-1})$ and $(y_{w}, q^{k-1 -w})$, and so the canonical Poisson bracket on the cotangent bundle carries bi-weight $(1-k, -2)$. However, notice that there are no functions of bi-weight $(\bullet , \textnormal{odd})$ and so the only possible choice of a (pre-)homological potential is just $\Theta =0$.   Any isotropic subbundle  of $\sT^{*} F_{k-1}$ is a sub-Dirac structure.
\end{example}
\begin{example}
As an example of a VB-Courant algebroid structure, consider $F_{(1,2)} :=  \sT^{*}\Pi \sT E^{*}$, where $E \rightarrow M$ is a vector bundle, which we will for simplicity assume to be in the category of  manifolds. We equip $F_{(1,2)}$ with local coordinates of the form
$$(\underbrace{x^{a}}_{(0,0)}, \: \underbrace{q_{i}}_{(1,0)}, \: \underbrace{\rmd x^{b}}_{(0,1)}, \: \underbrace{\rmd q_{j}}_{(1,1)}, \:  \underbrace{p_{c}}_{(1,2)},\:  \underbrace{y^{k}}_{(0,2)}, \: \underbrace{\pi_{d}}_{(1,1)},\: \underbrace{\chi^{l}}_{(0,1)}).$$
With this assignment of bi-weights, it is clear that the canonical Poisson bracket carries bi-weight $(-1,-2)$ as desired. Canonically,
$$\theta =  \rmd x^{a} p_{a} + \rmd q_{i} y^{i}$$
provides a homological potential and we have the structure of a VB-Courant algebroid or in our language a weighted Courant algebroid of degree $2$.\par
Now, let us `twist' the homological potential by adding \emph{any} linear 3-form
$$\beta =  \frac{1}{3!} \rmd x^{a} \rmd x^{b} \rmd x^{c}\beta_{cba}^{i}(x)q_{i} + \frac{1}{2!} \rmd x^{a} \rmd x^{b}\beta_{ba}^{i}(x) \rmd q_{i},$$
which we understand as a function of bi-weight $(1,3)$ on $\Pi \sT E^{*}$. Now consider the pre-homological potential
$$\Theta :=  \theta +  \beta.$$
A direct  calculation shows that
$$\{ \Theta, \Theta \} = 2 \{ \theta, \beta \} = 2 \: \rmd \beta \neq 0\,,$$
where $\rmd$ is the de Rham differential on $\Pi \sT E^{*}$. We do \emph{not} at this stage assume that $\beta$ is closed.  It is not hard to see that
$$\{ \{\Theta, \Theta \}, f \} = 2\{\rmd \beta, f  \} =0$$
for all $f \in \mathcal{A}^{\bullet}(E^{*})$, due to the lack of conjugate variables in the final Poisson bracket. Thus, we have  a weighted  pre-Courant algebroid of degree $2$.\par
An obvious example of a  weighted Dirac structure here is $\Pi \sT E^{*}$, which comes canonically with the de Rham differential. Let us find a less obvious example of weighted Dirac structure by considering a linear two form
$$\alpha  = \frac{1}{2} \rmd x^{a}\rmd x^{b}\alpha_{ba}^{i}(x)q_{i} + \rmd x^{a}\alpha_{a}^{i}(x)\rmd q_{i},$$
which we understand as a function of bi-weight $(1,2)$ on $\Pi \sT E^{*}$. Associated with any such two form is the Lagrangian subbundle
$$ \iota_\alpha : \mathcal{D}_{\alpha} \hookrightarrow \sT^{*}\Pi \sT E^{*}\,,$$
defined locally viz
\begin{align*}
& p_{a} \circ \iota_\alpha =  \frac{\partial \alpha}{\partial x^{a}}, & y^{i} \circ \iota_\alpha =  \frac{\partial \alpha}{\partial q_{i}}, && \pi_{b} \circ \iota_\alpha =  \frac{\partial \alpha}{\partial \rmd x^{b}}, && \chi^{j} \circ \iota_\alpha =  \frac{\partial \alpha}{\partial \rmd q_{j}}.
\end{align*}
Thus, to ensure that $Q_{\Theta}$ is tangent to $\mathcal{D}_{\alpha}$ we require $\Theta \circ \iota_\alpha = 0$, and so we arrive at
$$\rmd \alpha  + \beta =0.$$
The above then implies the consistency condition  $\rmd \beta =0$ and so, for this particular weighted Dirac  structure, we are forced to consider a genuine Courant algebroid.
 \end{example}

\section{Concluding remarks}\label{sec:Con}
We have shown that pre-Courant algebroids are `really' symplectic almost Lie 2-algebroids, just as Courant algebroids are `really' symplectic Lie 2-algebroids. This shift in the fundamental starting place allows for  many of the known facts about pre-Courant algebroids to be neatly reformulated. In many cases, the proofs of various statements become significantly simpler than those presented using a classical framework. For example, the  notion of  (sub-)Dirac structures on pre-Courant algebroids as certain isotropic submanifolds is conceptually neat and clear. \par
Working with symplectic almost Lie 2-algebroids allows for a very  economical and powerful framework to understand pre-Courant algebroids with a compatible non-negative grading. Weighted pre-Courant algebroids directly generalise VB-Courant algebroids, while at the same time  their use drastically simplifies  working  with the latter. In particular, the notion of weighted sub-Dirac structures is almost obvious and naturally covers the case of VB-Dirac structures.\par
Pre-Courant algebroids and especially weighted pre-Courant algebroids have a rich and interesting supergeometrical structure which, in our opinion, deserves further study. We are convinced that the understanding of pre-Courant algebroids as symplectic almost Lie 2-algebroids will allow for further mathematical developments and applications in theoretical physics.

\section*{Acknowledgements}
We thank Yunhe Sheng for his interest and useful comments with regards to this work.  Furthermore, we cordially thank the anonymous referee for their comments and suggestions which have served to vastly improve the overall presentation of this paper.

\end{document}